\newtheorem{theorem}{Theorem}
\newtheorem{proposition}[theorem]{Proposition}
\newtheorem{conjecture}[theorem]{Conjecture}
\newtheorem{fact}[theorem]{Fact}
\theoremstyle{definition}
\newtheorem{definition}[theorem]{Definition}
\def\ket#1{\mathinner{|{#1}\rangle}}
  \gdef\Braket#1{\left<\mathcode`\|"8000\let|\BraVert {#1}\right>}}
\def\BraVert{\egroup\,\mid@vertical\,\bgroup}
\newcommand{\oprod}[2]{| #1 \rangle\langle #2 |}
\newcommand{\iprod}[2]{\langle #1 | #2 \rangle}
\DeclareMathOperator{\Tr}{Tr}
\DeclareMathOperator{\cl}{cl}
\DeclareMathOperator{\conv}{conv}
\newcommand{\M}{\mathcal{M}}
\newcommand{\E}{\mathcal{E}}
\newcommand{\id}{\mathbbm{1}}
\newcommand{\dagg}{\dagger}
\newcommand{\vect}[1]{\bm{#1}}
\newcommand{\vprod}[2]{\vect{#1}\cdot\vect{#2}}
\begin{document}

\title{Noise and Disturbance of Qubit Measurements: \\ An Information-Theoretic Characterisation}

\author{Alastair A. Abbott}
\email{alastair.abbott@neel.cnrs.fr}
\affiliation{Institut N\'{e}el, CNRS and Universit\'{e} Grenoble Alpes, 38042 Grenoble Cedex 9, France}

\author{Cyril Branciard}
\affiliation{Institut N\'{e}el, CNRS and Universit\'{e} Grenoble Alpes, 38042 Grenoble Cedex 9, France}

\date{\today}

\begin{abstract}
Information-theoretic definitions for the noise associated with a quantum measurement and the corresponding disturbance to the state of the system have recently been introduced~[F. Buscemi \emph{et~al.}, Phys. Rev. Lett. \textbf{112}, 050401 (2014)].
These definitions are invariant under relabelling of measurement outcomes, and lend themselves readily to the formulation of state-independent uncertainty relations both for the joint estimate of observables (noise-noise relations) and the noise-disturbance tradeoff.
Here we derive such relations for incompatible qubit observables, which we prove to be tight in the case of joint estimates, and present progress towards fully characterising the noise-disturbance tradeoff.
In doing so, we show that the set of obtainable noise-noise values for such observables is convex, whereas the conjectured form for the set of obtainable noise-disturbance values is not.
Furthermore, projective measurements are not optimal with respect to the joint-measurement noise or noise-disturbance tradeoffs. 
Interestingly, it seems that four-outcome measurements are needed in the former case, whereas three-outcome measurements are optimal in the latter.
\end{abstract}


	
\maketitle


\section{Introduction}

Heisenberg's uncertainty principle is one of the defining nonclassical features of quantum mechanics, and expresses one of the fundamental physical consequences of the noncommutativity of quantum observables.
Informally, the principle states that the measurement of one quantum observable (such as the position of a particle, $x$) introduces an irreversible disturbance into any complementary observable of the system (such as the particle's momentum, $p$), thus rendering it impossible to simultaneously measure, with arbitrary precision, the values of incompatible observable quantities.

Heisenberg's original presentation of the uncertainty principle, exhibited in his microscope \emph{Gedankenexperiment}~\cite{Heisenberg:1927zh}, was rather informal, and despite the evident physical importance of the principle it was a long time before it was rigorously formalised.
Instead, subsequent theoretical work on the incompatibility of quantum observables focused on the inability to produce states with sharply defined values associated with noncommuting observables.
These results are typically expressed in the form of uncertainty relations for the standard deviations of such observables -- such as Kennard's well known relation~\cite{Kennard:1927oa} $\Delta x\Delta p \ge \frac{\hbar}{2}$ -- and express a subtly different, although related, physical consequence of noncommutativity.
To avoid confusion, we will call such relations \emph{preparation uncertainty relations}.

It is only much more recently that, with the help of a more modern theory of quantum measurement~\cite{Kraus:1983aa}, it has become possible to more rigorously quantify the noise and disturbance of a measurement, e.g.\ by defining noise and disturbance measures based on the root-mean-square distance between target observables and the measurement made~\cite{Ozawa:2003fh} or by quantifying the distance between their output distributions~\cite{Werner:2004aa}.
This has allowed Heisenberg's uncertainty principle to be formalised in terms of \emph{measurement uncertainty relations} between such measures of noise and disturbance, although there still remains debate as to which measure is the most appropriate~\cite{Branciard:2013hb,Busch:2014ts,Dressel:2014fx,Hall:2004eq,Ozawa:2003fh,Werner:2004aa}.
In fact, one may distinguish further two forms of measurement uncertainty relations expressing the incompatibility of such  measurements~\cite{Busch:2014ts}: 
\emph{noise-noise relations} for joint measurements, expressing the tradeoff in precision with which two complementary observables can be simultaneously measured;
and \emph{noise-disturbance relations}, expressing the tradeoff between the precision of a measurement and the subsequent disturbance to the state with respect to a complementary observable. 

Perhaps motivated by the success of entropic (preparation) uncertainty relations~\cite{Coles:2015ef}, which use entropy rather than the standard deviation to measure the uncertainty associated with an observable for a given state, a recent proposal by Buscemi \emph{et al.}~\cite{Buscemi:2014aa}\ set out a new approach to quantifying the noise and disturbance associated with a measurement based on information-theoretic concepts.
This approach, in contrast to those mentioned above, uses the information gained and lost during measurement to provide intuitive measures of noise and disturbance; that is, it looks at the correlations between input states and measurement outcomes, using the notion of conditional entropy to quantify them.
As for entropic uncertainty relations, this approach is invariant under the relabelling of measurement outcomes and, furthermore, provides measures of noise and disturbance that are state-independent: they depend only on the complementary observables in question and the measurement performed.

More recently, several alternative information-theoretic approaches to defining noise and disturbance have been proposed.
Perhaps most notably, Ref.~\cite{Barchielli:2016aa} defines them in terms of the relative entropy between the distributions associated with the target observables and the measurement made.
This approach differs conceptually from that of Buscemi \emph{et al.}, which is instead based on the uncertainty in the post-measurement distribution conditioned on the \emph{pre}-measurement distribution, and is more in line with the approach of Ref.~\cite{Barnum:2000aa}.
Another proposal~\cite{Coles:2015aa}, albeit in a slightly different operational setting, combines these approaches, using the conditional entropy to define the noise and the relative entropy to define the disturbance.
Various other related information-theoretic~\cite{Baek:2016aa} and operational~\cite{Schwonnek:2016aa} approaches have also been recently investigated, emphasising the subtleties of the problem, but we will not discuss these further as we aim to tackle specific questions within the formalism of Buscemi \emph{et al.}~\cite{Buscemi:2014aa}.

In proposing this approach, the authors proved a state-independent measurement uncertainty relation that is valid for arbitrary observables in any finite Hilbert space~\cite{Buscemi:2014aa}. 
However, as is the case with similar preparation uncertainty relations, the result is far from tight in general.
It is thus of interest to look at simpler systems to find tight relations and fully understand the noise-noise and noise-disturbance tradeoffs.
The simplest nontrivial quantum system one can envisage is, of course, the qubit, and in a subsequent paper an apparently tight noise-disturbance relation for orthogonal qubit observables was proposed and tested experimentally~\cite{Sulyok:2015aa}.
Unfortunately, as we will discuss, the proof of this relation was incorrect, thus casting doubt on its validity;
indeed, we will show that it is incorrect in general, although it can be shown to hold in some particular cases.

In this paper, we revisit the qubit scenario, looking not only at noise-disturbance relations, but also at noise-noise relations for joint measurements.
We completely characterise the joint-measurement scenario for arbitrary qubit observables, showing that the set of obtainable noise-noise values is convex and that it seems four-outcomes measurements are required to saturate the tradeoff.
On the other hand, we provide evidence that the set of obtainable noise-disturbance points is non-convex, and that three-outcome measurements are both necessary and sufficient to saturate the tradeoff.
Finally, we prove that measurements made using ``L\"{u}ders instruments'', a natural class of instruments in which the state is updated according to the so-called ``square-root dynamics'', are not optimal and that in fact they satisfy the (more restrictive) relation originally given in Ref.~\cite{Sulyok:2015aa}.
Thus, non-trivial corrections are needed to perform optimal measurements with respect to the noise-disturbance tradeoff.

\section{Theoretical framework: Entropic definitions of noise and disturbance}

Let us first outline the information-theoretic framework for quantifying noise and disturbance that we shall use, and which was first presented in~\cite{Buscemi:2014aa}.

We shall consider two (for simplicity, non-degenerate) observables $A$ and $B$ on a finite dimensional Hilbert space with respective (normalised) eigenstates $\{\ket{a}\}_a$ and $\{\ket{b}\}_b$, where $a$ and $b$ label the respective eigenvalues (their numerical values are irrelevant).
According to quantum theory, the measurement device $\M$, with measurement outcomes labelled by $m$, is represented in the most general way possible as a quantum instrument~\cite{Davies:1970aa}.
Let us recall the definition of a quantum instrument.
\begin{definition}
	A \emph{quantum instrument} $\M$ is a collection $\{\M_m\}_m$ of completely positive (CP) trace-non-increasing maps $\M_m$ such that the map\footnote{This slight abuse of notation is generally unambiguous and proves convenient.} $\M=\sum_m\M_m$ is a completely positive trace-preserving (CPTP) map, i.e., $\Tr[\M(\rho)]=\Tr[\rho]$ for all Hermitian $\rho$.	
	The probability of obtaining outcome $m$ when measuring $\M$ on any (normalised) state $\rho$ is $\Tr[\M_m(\rho)]$, and the post-measurement state is $\frac{\M_m(\rho)}{\Tr[\M_m(\rho)]}$.
	
Every instrument $\M=\{\M_m\}_m$ uniquely defines a positive-operator valued measure (POVM) $M=\{M_m\}_m$ whose elements\footnote{The POVM elements $M_m$ can be obtained from the (non-unique) Kraus operators $\{K_{m,i}\}_i$ in the operator-sum representation of $\M_m$ as $\M_m(\rho)=\sum_i K_{m,i} \rho K_{m,i}^\dagg$. Specifically, one has $M_m=\sum_i K_{m,i}^\dagg K_{m,i}$.} $M_m$ are Hermitian positive semidefinite operators satisfying $\sum_m M_m=\id$ (where $\id$ is the identity operator) and $\Tr[M_m \rho]=\Tr[\M_m(\rho)]$ for all $\rho$.
This POVM determines only the probability of each measurement outcome, ignoring the post-measurement state.
\end{definition}

Let us first consider the noise of $\M$ with respect to $A$, $N(\M,A)$.
Imagine an experiment in which the eigenstates $\ket{a}$ of $A$ are prepared with equal probability and measured by $\M$.
The correlation between the eigenvalue $a$ corresponding to the state prepared and the outcome $m$ measured, which will be used to define the noise, is characterised by the joint probability distribution 
\begin{equation}\label{eqn:jointDistMA}
	p(m,a)=p(a)p(m|a)=\frac{1}{d}\,p(m|a),
\end{equation}
where $d$ is the Hilbert space dimension, and 
\begin{equation}\label{eqn:jointDistMAcond}
	p(m|a)=\Tr[\M_m(\oprod{a}{a})]=\Tr[M_m\,\oprod{a}{a}].
\end{equation}

We denote the classical random variables associated with $a$ and $m$ by $\mathbb{A}$ and $\mathbb{M}$, respectively.
This scenario is depicted schematically in Fig.~\ref{fig:NDSchematic}(a).

Recall that the Shannon entropy $H(\mathbb{X})$ of a random variable $\mathbb{X}$ distributed according to $p(x)$ is defined as
\begin{equation}\label{eqn:ShannonEntropy}
	H(\mathbb{X})=-\sum_x p(x) \log p(x),
\end{equation}
where the logarithms are taken in base 2 (as are all subsequent ones appearing in this paper).
\begin{definition}
	The \emph{noise} of $\M$ for a measurement of $A$ is $N(\M,A)=H(\mathbb{A}|\mathbb{M})$, where $H(\mathbb{A}|\mathbb{M})=H(\mathbb{A},\mathbb{M})-H(\mathbb{M})$ is the conditional entropy of $\mathbb{A}$ given $\mathbb{M}$ and can be calculated directly from the joint distribution~\eqref{eqn:jointDistMA} and the marginal distribution $p(m)=\sum_a p(m,a)$.
\end{definition}

This definition of noise thus quantifies the uncertainty as to which eigenstate was prepared, given the measurement outcome $m$ of $\M$.

By writing the conditional entropy explicitly in an alternative, equivalent form as
\begin{align}
	H(\mathbb{A}|\mathbb{M}) &= \sum_m p(m)H(\mathbb{A}|\mathbb{M}=m)\notag\\
		&= -\sum_m p(m)\sum_a p(a|m)\log p(a|m)
\end{align}
it is possible to express the noise in terms of the entropies $H(A|\rho_m)$ of the quantum observable $A$ for a set of states $\{\rho_m\}_m$, where $H(A|\rho)$ is defined as 
\begin{equation}\label{eqn:quantumEntropy}
H(A|\rho)=-\sum_a \Tr\big[\oprod{a}{a}\,\rho\big]\log\Tr\big[\oprod{a}{a}\,\rho\big].
\end{equation}
Specifically, by explicit calculation from the joint distribution $p(m,a)$, we have 
\begin{equation}\label{eqn:mMarginal}
	p(m)=\frac{1}{d}\Tr[M_m]
\end{equation}
and
\begin{equation}
	p(a|m) = \Tr\left[ \oprod{a}{a} \frac{M_m}{\Tr[M_m]} \right].
\end{equation}
Noting that, for all $m$, $\rho_m=\frac{M_m}{\Tr[M_m]}$ is a semidefinite positive trace-1 operator and thus defines a valid quantum state,
we see that $H(\mathbb{A}|\mathbb{M}=m)=H(A|\rho_m)$ and thus
\begin{equation}\label{eqn:NoiseConvEntropyRep}
	N(\M,A)=\sum_m p(m) H(A|\rho_m).
\end{equation}
This result was derived in the supplemental material of Ref.~\cite{Buscemi:2014aa} via a substantially more complicated argument, and in the Appendix we discuss an operational interpretation of this result and its relation to the approach of~\cite{Buscemi:2014aa}.
Note finally that the noise depends only on the POVM $M$, and not the full description of the instrument $\M$.

\begin{figure*}[ht]
	\begin{center}
	\begin{tabular}{ccc}
	\includegraphics[scale=0.95]{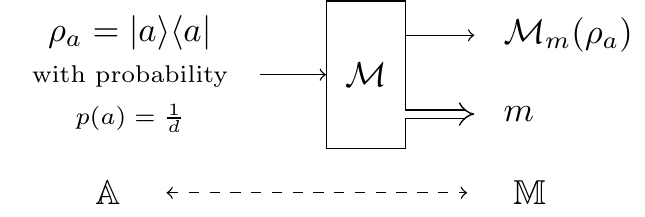}
	&
	\qquad
	&
	\includegraphics[scale=0.95]{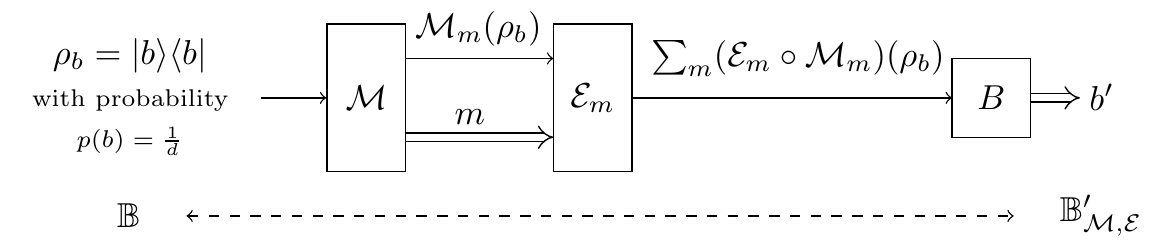}
	\\
	(a) && (b)\\
	\end{tabular}
	\end{center}
	\caption{Schematics of the scenarios used in the information-theoretic definitions of (a) noise, $N(\M,A)$, and (b) disturbance, $D(\M,B)$. The eigenstates $\ket{a}$ of $A$ (or $B$, for disturbance) are prepared with equal probability, before being measured by $\M$, producing outcome $m$ and transforming the state according to $\M_m$. In (b), a correction $\mathcal{E}_m$ is then applied and a further projective measurement of $B$ is performed generating the outcome $b'$, which is used to determine the disturbance.}
	\label{fig:NDSchematic}
\end{figure*}

The disturbance $D(\M,B)$ is defined with respect to an analogous experiment where this time eigenstates $\ket{b}$ of $B$ are prepared with equal probability, and one looks at the uncertainty in $B$ following the measurement.
This is quantified by the correlation between $b$ and the outcome $b'$ of a further projective measurement of $B$ following $\M$.
Since the definition is intended to quantify only the irreversible loss of information due to $\M$, a correction $\mathcal{E}_m$ may be performed prior to this subsequent measurement, where $\mathcal{E}_m$ is a CPTP map which may depend on the measurement outcome $m$.
This scenario is characterised by the joint probability distribution
\begin{equation}\label{eqn:jointDistBB}
	p(b',b)=p(b)p(b'|b)=\frac{1}{d}\,p(b'|b),
\end{equation}
where $p(b'|b)$ is given by the Born rule as
\begin{align}\label{eqn:jointDistBBcond}
	p(b'|b) 
		& =\Tr\left[ \sum_m(\mathcal{E}_m\circ\M_m)(\oprod{b}{b})\cdot \oprod{b'}{b'} \right].
\end{align}
We denote the random variables associated with $b$ and $b'$ by $\mathbb{B}$ and $\mathbb{B}_{\M,\E}'$, respectively.
This scenario is depicted in Fig.~\ref{fig:NDSchematic}(b).
\sloppy
\begin{definition}
	Let $\mathcal{E}=\{\mathcal{E}_m\}_m$ be a correction procedure. 
	The \emph{$\E$-disturbance} due to $\M$ on any subsequent measurement of $B$ is $D_\E(\M,B)=H(\mathbb{B}|\mathbb{B}_{\M,\E}')$, where the conditional entropy $H(\mathbb{B}|\mathbb{B}_{\M,\E}')$ is calculated from Eq.~\eqref{eqn:jointDistBB}.
	The \emph{disturbance} is then defined as $D(\M,B)=\min_{\mathcal{E}}D_\E(\M,B)$, where the minimisation is taken over all correction procedures $\mathcal{E}$.
\end{definition}
\fussy

This definition of disturbance thus quantifies the uncertainty as to which eigenstate was prepared, given the measurement outcome $b'$ of $B$ on the state after the measurement $\M$ and the optimal correction procedure $\mathcal{E}$.
Contrary to the case of noise (see Eq.~\eqref{eqn:NoiseConvEntropyRep}), there is no simple, general expression for the disturbance (although in some specific cases it is possible to calculate it more explicitly, cf.\ Appendix).
As we will see, this contributes to making the characterisation of the noise-disturbance tradeoff more complicated than it is for the noise-noise tradeoff.

We briefly note that these definitions of noise and disturbance do not generalise readily to infinite dimensional systems due the assumption that the eigenstates of the observables in question are prepared uniformly at random.
Although it is possible to modify the definitions in an attempt to address this, such modifications (e.g., those discussed in Ref.~\cite{Buscemi:2014aa} to accommodate continuous observables) lack much of the operational appeal of the above definitions for discrete systems.

\section{Measurement uncertainty relations}

\subsection{General case}

Using these notions of noise and disturbance, Ref.~\cite{Buscemi:2014aa} proved that, for arbitrary observables $A$ and $B$ in finite dimensional Hilbert spaces, both the noise-noise (joint-measurement) relation 
\begin{equation}\label{eqn:MUNN}
	N(\M,A) + N(\M,B)\ge -\log\,\max_{a,b}|\iprod{a}{b}|^2,
\end{equation}
and the noise-disturbance relation
\begin{equation}\label{eqn:MUND}
	N(\M,A) + D(\M,B)\ge -\log\,\max_{a,b}|\iprod{a}{b}|^2,
\end{equation}
hold.
That these relations bear a clear resemblance to the well-known Maassen and Uffink entropic preparation uncertainty relation~\cite{Maassen:1988vl} is no coincidence.
Indeed, their derivation relied on results (cf.\ Propositions~\ref{prop:RNNDef} and~\ref{prop:NN-ND-relation} below) providing bounds for both the noise and disturbance in terms of the entropic uncertainties $H(A|\rho)$ and $H(B|\rho)$ for the observables $A$ and $B$ and some state $\rho$, to which the state-independent Maassen and Uffink relation could be applied~\cite{Buscemi:2014aa}.

However, just like Maassen and Uffink's uncertainty relation, relations~\eqref{eqn:MUNN} and~\eqref{eqn:MUND} are not tight in general.
Rather, one would often like to know precisely which noise-noise and noise-disturbance values are obtainable and which are not; that is, to characterise the \emph{noise-noise region}
\begin{align}
	R_{NN}(A,B)=\big\{&\big(N(\M,A),\ N(\M,B)\big) \mid  \notag\\
		& \qquad \M \text{ is a quantum instrument}\big\},
\end{align}
as well as the \emph{noise-disturbance region}
\begin{align}
	R_{ND}(A,B)=\big\{&\big(N(\M,A),\ D(\M,B)\big) \mid  \notag\\
		& \quad \ \ \M \text{ is a quantum instrument}\big\}.
\end{align}

The connection between these two regions and the entropic uncertainty region \begin{equation}\label{eqn:EABdefn}
	E(A,B)=\{\left(H(A|\rho),\, H(B|\rho) \right) \mid \rho \text{ is any density matrix}\}
\end{equation}
will prove fruitful in the search for tighter measurement uncertainty relations.
We therefore find it helpful to distill this connection into the following two propositions, the essence of which can be found implicitly in the arguments contained in the supplemental materials of Refs.~\cite{Buscemi:2014aa} and~\cite{Sulyok:2015aa}.
We provide more direct proofs of these propositions in the Appendix.

Firstly, the noise-noise region can be expressed, as mentioned above, in terms of the entropic uncertainties of the observables in questions.
This proposition follows from the ability to write the noise $N(\M,A)$ in the form of Eq.~\eqref{eqn:NoiseConvEntropyRep}.
\begin{proposition}\label{prop:RNNDef}
	The noise-noise region can be expressed in terms of the entropies of the observables $A$ and $B$ as follows:
	\begin{align}\label{eqn:NNRegionForm}
		R_{NN}(A,B) =& \left\{\sum_m p(m)\big(H(A|\rho_m),\, H(B|\rho_m) \big) \, \Big| \right. \notag\\ 
		& \quad \{p(m),\rho_m\}_m \text{ is a weighted ensemble}\notag\\[-2mm]
		& \quad \left. \text{of states satisfying $\sum_m p(m) \rho_m=\id/d$}\right\}\notag\\
		\subseteq & \conv E(A,B),
	\end{align}
	where $\id$ is the identity operator, $d$ is the Hilbert space dimension, and $\mathrm{conv}\,S$ denotes the convex hull of $S$.
\end{proposition}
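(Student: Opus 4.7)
The plan is to establish the set equality in Eq.~\eqref{eqn:NNRegionForm} by showing inclusion in both directions, and then to deduce the convex-hull containment as an immediate corollary. The main input will be Eq.~\eqref{eqn:NoiseConvEntropyRep}, which already rewrites the noise as a weighted sum of the entropies $H(A|\rho_m)$, together with the standard one-to-one correspondence between POVMs and weighted ensembles of states with barycentre $\id/d$.

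For the inclusion $R_{NN}(A,B) \subseteq \{\cdots\}$, I would take an arbitrary instrument $\M$ with induced POVM $\{M_m\}_m$, and (for $\Tr[M_m] > 0$) set $p(m) := \Tr[M_m]/d$ and $\rho_m := M_m/\Tr[M_m]$, consistent with Eqs.~\eqref{eqn:mMarginal}--\eqref{eqn:NoiseConvEntropyRep}. Applying Eq.~\eqref{eqn:NoiseConvEntropyRep} to both $A$ and $B$ gives
\begin{equation*}
 \bigl(N(\M,A),\,N(\M,B)\bigr) = \sum_m p(m)\bigl(H(A|\rho_m),\,H(B|\rho_m)\bigr),
\end{equation*}
while $\sum_m M_m = \id$ translates directly into the barycentre condition $\sum_m p(m)\rho_m = \id/d$, confirming that $\{p(m),\rho_m\}_m$ is an admissible ensemble. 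For the reverse inclusion, I would invert this construction: given any ensemble $\{p(m),\rho_m\}_m$ with $\sum_m p(m)\rho_m = \id/d$, define $M_m := d\,p(m)\,\rho_m$. Positivity of $p(m)$ and $\rho_m$ guarantees $M_m \ge 0$, and the barycentre condition gives $\sum_m M_m = \id$, so $\{M_m\}_m$ is a valid POVM. Any instrument realising it — e.g.\ the L\"uders instrument $\M_m(\rho) := \sqrt{M_m}\,\rho\,\sqrt{M_m}$ — then exhibits $(N(\M,A),N(\M,B))$ equal to the desired weighted entropy pair, once again by Eq.~\eqref{eqn:NoiseConvEntropyRep} and the fact that the noise depends only on the POVM.

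The containment in $\conv E(A,B)$ is then immediate: each pair $(H(A|\rho_m),H(B|\rho_m))$ lies in $E(A,B)$ by the definition~\eqref{eqn:EABdefn}, and the weighted sum with nonnegative weights $p(m)$ summing to $1$ is a convex combination. I do not expect any serious obstacle: the proof is essentially a bookkeeping exercise once Eq.~\eqref{eqn:NoiseConvEntropyRep} is available. The most subtle point to flag is that the containment in $\conv E(A,B)$ will in general be strict, since the barycentre constraint $\sum_m p(m)\rho_m = \id/d$ rules out arbitrary convex combinations of points in $E(A,B)$; this observation is precisely what makes the characterisation via ensembles useful for later tightening the noise-noise relation beyond Eq.~\eqref{eqn:MUNN}.
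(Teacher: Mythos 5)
Your proof is correct and follows essentially the same route as the paper's: the appendix likewise derives the set equality from Eq.~\eqref{eqn:NoiseConvEntropyRep} together with the two-way correspondence between POVMs and weighted ensembles with barycentre $\id/d$ (via $M_m = d\,p(m)\,\rho_m$), and the convex-hull containment as an immediate consequence. No gaps to report.
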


Note that this result can also be directly extended to characterise the joint-measurement noise region for three-or-more observables as being included in the convex hull of the corresponding entropic preparation uncertainty region~\cite{Abbott:2016aa}.

Secondly, there is an important relation between the joint-measurement noise and noise-disturbance regions:
the lower boundary of $R_{ND}(A,B)$ always lies on or above the lower boundary of $R_{NN}(A,B)$.
More formally, we have the following proposition relating $R_{NN}(A,B)$ and $R_{ND}(A,B)$.
\begin{proposition}\label{prop:NN-ND-relation}
	For any observables $A,B$ one has 
	\begin{equation}
		R_{ND}(A,B)\subseteq \cl R_{NN}(A,B),
	\end{equation} 
	where $\cl$ denotes the monotone closure (i.e., the closure under increasing either coordinate) up to to the trivial upper bounds $N(\M,A),D(\M,B)\le \log d$.
\end{proposition}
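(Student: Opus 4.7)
The plan is to show that every point of $R_{ND}(A,B)$ is coordinate-wise dominated by a point of $R_{NN}(A,B)$, which is exactly what the monotone closure condition requires (and the trivial bounds $\log d$ come for free because conditional entropies of $d$-valued variables cannot exceed $\log d$).

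Concretely, let $\M=\{\M_m\}_m$ be any quantum instrument and let $\E=\{\E_m\}_m$ be a correction procedure attaining the minimum in the definition of $D(\M,B)$. I would construct a new instrument $\M'$ whose outcomes are pairs $(m,b')$: apply $\M$, obtain outcome $m$, apply $\E_m$, and then perform a projective measurement of $B$ to obtain outcome $b'$. In formulas, $\M'_{(m,b')}(\rho) = \Tr\bigl[\oprod{b'}{b'}\,\E_m(\M_m(\rho))\bigr]\,\oprod{b'}{b'}$. A quick check confirms that each $\M'_{(m,b')}$ is CP and that $\sum_{m,b'}\M'_{(m,b')}$ is trace-preserving, so $\M'$ is a legitimate instrument.

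The heart of the argument is then just ``conditioning reduces entropy.'' Feeding the eigenstates $\ket{a}$ of $A$ uniformly into $\M'$ produces the joint distribution of $(\mathbb{A},\mathbb{M},\mathbb{B}'_{\M,\E})$, and by definition $N(\M',A)=H(\mathbb{A}\mid \mathbb{M},\mathbb{B}'_{\M,\E})\le H(\mathbb{A}\mid\mathbb{M})=N(\M,A)$, where the right-hand side coincides with the noise of the original $\M$ because the marginal on $(\mathbb{A},\mathbb{M})$ is unchanged by the subsequent trace-preserving post-processing $\E_m$ and projective $B$-measurement. Similarly, feeding in the eigenstates $\ket{b}$ of $B$ produces the joint distribution of $(\mathbb{B},\mathbb{M},\mathbb{B}'_{\M,\E})$, and
\begin{equation*}
N(\M',B)=H(\mathbb{B}\mid\mathbb{M},\mathbb{B}'_{\M,\E})\le H(\mathbb{B}\mid\mathbb{B}'_{\M,\E})=D_\E(\M,B)=D(\M,B).
\end{equation*}
Hence $(N(\M',A),N(\M',B))\in R_{NN}(A,B)$ lies coordinate-wise below $(N(\M,A),D(\M,B))$, placing the latter in $\cl R_{NN}(A,B)$.

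There is no real obstacle: the only subtlety is verifying that the marginal over $(\mathbb{A},\mathbb{M})$ in the noise experiment for $\M'$ really matches the one used to define $N(\M,A)$ (which follows because $\E$ and the projective $B$-measurement are trace-preserving and act \emph{after} $\M_m$, so they do not affect the joint statistics of the input label and the $\M$-outcome), and that the extra coordinate $b'$ in the outcome of $\M'$ does not change the input distribution required by the definition of noise. Once these book-keeping points are confirmed, the conditioning inequality delivers both bounds and the proposition follows.
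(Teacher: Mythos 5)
Your proposal is correct and follows essentially the same route as the paper's own proof: both construct the composite instrument with outcomes $(m,b')$ obtained by appending the optimal correction $\E_m$ and the final projective $B$-measurement to $\M$, and both then apply the classical data-processing (conditioning-reduces-entropy) inequality to get $H(\mathbb{A}|\mathbb{M},\mathbb{B}'_{\M,\E})\le H(\mathbb{A}|\mathbb{M})$ and $H(\mathbb{B}|\mathbb{M},\mathbb{B}'_{\M,\E})\le H(\mathbb{B}|\mathbb{B}'_{\M,\E})$, placing the noise-disturbance point in the monotone closure of $R_{NN}(A,B)$. The extra book-keeping you flag about the $(\mathbb{A},\mathbb{M})$ marginal being unaffected by the trace-preserving post-processing is a reasonable thing to verify explicitly, but it does not change the argument.
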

Note that it need not be the case that $R_{ND}(A,B)\subseteq R_{NN}(A,B)$ in general.
For example, in the scenario depicted in Fig.~\ref{fig:NNregion}(b), the point $(N(\M,A),\,N(\M,B))=(1,0)$ is not contained in $R_{NN}(A,B)$, whereas $(N(\M,A),\,\allowbreak D(\M,B))=(1,0)$ is, for qubit measurements, always contained in $R_{ND}(A,B)$ since one can have an instrument that performs the identity transformation and generates a random output.

\subsection{Qubit measurement uncertainty relations}

The relationship between the measurement uncertainty regions and the entropic preparation uncertainty region opens the possibility of providing tighter noise-noise and noise-disturbance uncertainty relations.
Indeed, many of the known state-independent entropic preparation uncertainty relations (e.g., see Refs.~\cite{Vicente:2008oq,Coles:2015ef}) could be used to improve upon Eqs.~\eqref{eqn:MUNN} and~\eqref{eqn:MUND}.
However, such relations that are applicable to arbitrary systems are generally still far from being tight.
For simpler systems such as qubits, on the other hand, much better characterisations are generally possible and of particular interest~\cite{Abdelkhalek:2015cr,Abbott:2016aa}.

In Ref.~\cite{Sulyok:2015aa} the following noise-disturbance relation was proposed for the orthogonal Pauli observables $\sigma_z$ and $\sigma_x$:
\begin{equation}\label{eqn:NDrelnOrthog}
	g\big( N(\M,\sigma_z) \big)^2 +g\big(D(\M,\sigma_x) \big)^2 \le 1,
\end{equation}
where $g$ is the inverse of the function $h$ defined for $x\in[0,1]$ as
\begin{equation}
	h(x)=-\frac{1+x}{2}\log\left(\frac{1+x}{2}\right) -\frac{1-x}{2}\log\left(\frac{1-x}{2}\right).
\end{equation}
Unfortunately, the proof given for this relation was incorrect.
In Section~\ref{sec:NDcharacterisation} we will show that, in fact, this relation does not hold in general, and conjecture a tight bound for the noise-disturbance region.
However, we will also see that the relation does hold in some particular restricted cases of interest, in particular when the measurement is performed by a L\"uders instrument for which the state is simply transformed according to the ``square-root measurement dynamics''.

The approach used to try and prove this relation, given in the supplemental material of~\cite{Sulyok:2015aa}, essentially attempts to show first that Eq.~\eqref{eqn:NDrelnOrthog} characterises the lower boundary of $R_{NN}(\sigma_z,\sigma_x)$, before making use 
of Proposition~\ref{prop:NN-ND-relation} and the fact that that Eq.~\eqref{eqn:NDrelnOrthog} can be saturated to show that it thus also characterises the lower boundary of $ $ $R_{ND}(\sigma_z,\sigma_x)$.

To see that this relation cannot be correct, we first note (a proof is given in the Appendix) that the restriction of $\sum_m p(m)\rho_m=\id/d$ on the weighted ensemble in Proposition~\ref{prop:RNNDef} can be disregarded for the case of qubits, and thus equality is obtained in Eq.~\eqref{eqn:NNRegionForm}.
\begin{proposition}\label{prop:QubitRNNDef}
	For qubits and observables $A$, $B$, the noise-noise region $R_{NN}(A,B)$ is given by
	\begin{align}\label{eqn:qubitNNRegion}
		R_{NN}(A,B) = \conv E(A,B).
	\end{align}
\end{proposition}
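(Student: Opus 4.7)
The plan is to establish the missing inclusion $\conv E(A,B) \subseteq R_{NN}(A,B)$, since Proposition~\ref{prop:RNNDef} already provides the reverse inclusion. According to Proposition~\ref{prop:RNNDef}, a point lies in $R_{NN}(A,B)$ precisely when it can be written as $\sum_m p(m) (H(A|\rho_m), H(B|\rho_m))$ for some weighted ensemble $\{p(m),\rho_m\}_m$ satisfying the extra constraint $\sum_m p(m) \rho_m = \id/d$. The whole content of the proposition, for qubits, is thus that this constraint can always be arranged without altering the averaged entropy pair.

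The key observation I would use is an antipodal symmetry available in dimension $2$: for any qubit state $\rho$ with Bloch vector $\vec{r}$, let $\rho^\perp$ denote the state with Bloch vector $-\vec{r}$. For either observable $A$ on a qubit, with eigenstates carrying Bloch vectors $\pm\vec{a}$, the Born probabilities are $\Tr[\oprod{a_\pm}{a_\pm}\rho] = (1\pm\vprod{a}{r})/2$, so the probability distribution associated with measuring $A$ on $\rho^\perp$ is simply the distribution for $\rho$ with the two outcome labels swapped. Since the Shannon entropy is invariant under relabelling, one gets $H(A|\rho^\perp) = H(A|\rho)$, and analogously $H(B|\rho^\perp) = H(B|\rho)$. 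Moreover, by construction $\tfrac12(\rho + \rho^\perp) = \id/2$.

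With this in hand, the remaining step is immediate. An arbitrary element of $\conv E(A,B)$ takes the form $\sum_i \lambda_i (H(A|\rho_i), H(B|\rho_i))$ for some qubit states $\rho_i$ and weights $\lambda_i \ge 0$ summing to $1$. I would ``symmetrize'' this ensemble by replacing each $\rho_i$ by the pair $\{\rho_i, \rho_i^\perp\}$, each carrying weight $\lambda_i/2$. By the invariance above, the averaged entropy pair is unchanged, while the ensemble average becomes $\sum_i (\lambda_i/2)(\rho_i+\rho_i^\perp) = \id/2$, satisfying the constraint in Proposition~\ref{prop:RNNDef}. Hence the point lies in $R_{NN}(A,B)$, which proves $\conv E(A,B) \subseteq R_{NN}(A,B)$ and completes the proposition.

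There is not really a serious obstacle here: the argument is a one-line trick once one notices the antipodal symmetry of the Bloch ball and the label-invariance of Shannon entropy. The only thing worth being careful about is that this symmetry is genuinely a qubit phenomenon, since in higher dimensions the map $\rho \mapsto \id/d - (\rho - \id/d)$ need not send states to states, which is precisely why the analogous equality cannot be expected for general $d$ and why Proposition~\ref{prop:RNNDef} only provides the inclusion rather than equality in that case.
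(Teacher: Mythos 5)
Your proposal is correct and follows essentially the same route as the paper's own argument: the Appendix proof also symmetrizes an arbitrary weighted ensemble by splitting each $\rho_m$ into the antipodal pair $\rho_m^\pm=\frac{1}{2}(\id\pm\vect{r}_m\cdot\vect{\sigma})$ with halved weights, noting that $H(A|\rho_m^\pm)=H(A|\rho_m)$ and $H(B|\rho_m^\pm)=H(B|\rho_m)$ while the new ensemble averages to $\id/2$. Your added remarks on why the entropies are invariant (outcome relabelling) and why the trick fails in higher dimensions match the paper's observations as well.
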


Written in this form, it is clear that $R_{NN}(\sigma_z,\sigma_x)$ is a convex set, whereas Eq.~\eqref{eqn:NDrelnOrthog} characterises a (strictly) concave set (see Fig.~\ref{fig:NNregion}(a)) and therefore cannot be the lower boundary of this region, thus undermining the proof given in Ref.~\cite{Sulyok:2015aa}.\footnote{Specifically, the error in the proof lies in the fact that the optimal values of $\theta_m$ ($=0$ or $\pi$) give denominators in Eq.~(6) of the supplemental material of~\cite{Sulyok:2015aa} that are $0$. Subsequent to our identification of this error the authors of~\cite{Sulyok:2015aa} published an erratum~\cite{Sulyok:2016aa} acknowledging it and showing that Eq.~\eqref{eqn:NDrelnOrthog} nevertheless holds in the specific case of dichotomic measure-and-prepare instruments (see Sec.~\ref{sec:NDdichotomic} for further discussion).}

\section{Joint-measurement uncertainty relations for qubits}

Before turning again to qubit noise-disturbance uncertainty relations, we will first make use of Proposition~\ref{prop:QubitRNNDef}, along with recent results on tight preparation uncertainty relations for qubits, to formulate tight noise-noise uncertainty relations for arbitrary qubit observables.
The case of joint-measurement noise for qubits is not only of independent interest, but such a characterisation of the noise-noise region will allow us, by making use of Proposition~\ref{prop:NN-ND-relation}, to start to characterise the noise-disturbance region as well.

\subsection{Arbitrary measurements}

Let $A=\vect{a}\cdot\bm{\sigma}$ and $B=\vect{b}\cdot\bm{\sigma}$ be arbitrary Pauli observables (where $\vect{a},\vect{b}$ are unit vectors on the Bloch sphere and $\bm{\sigma}=(\sigma_x,\sigma_y,\sigma_z)$). 
In a recent article~\cite{Abbott:2016aa}, it was shown that the qubit preparation uncertainty region $E(A,B)$ can be completely characterised by the tight preparation uncertainty relation in terms of standard deviations
\begin{align}\label{eqn:SDPrepReln}
	(\Delta A)^2 +  (\Delta B)^2 \,+\, & 2|\vect{a}\cdot\vect{b}|\sqrt{1-(\Delta A)^2}\sqrt{1-(\Delta B)^2} \notag\\
	& \qquad\qquad\qquad \ge 1 + (\vect{a}\cdot\vect{b})^2,
\end{align}
or its equivalent form in terms of entropies
\begin{align}\label{eqn:entropicPrepReln}
	& g\left( H(A|\rho) \right)^2 + g\left( H(B|\rho) \right)^2 \,\notag\\
	& -\, 2\,|\vect{a}\cdot\vect{b}|\,g\left( H(A|\rho) \right)\,g\left( H(B|\rho) \right)
	 \le 1 - (\vect{a}\cdot\vect{b})^2,
\end{align}
with the function $g$ as defined after Eq.~\eqref{eqn:NDrelnOrthog} above.

Relation~\eqref{eqn:entropicPrepReln}, along with Proposition~\ref{prop:QubitRNNDef}, can thus be used to give the following, tight, joint-measurement uncertainty relation for qubits.

\begin{theorem}\label{thm:NNrelnTight}
	Let $A=\vect{a}\cdot\bm{\sigma}$ and $B=\vect{b}\cdot\bm{\sigma}$ be two Pauli observables, and $\M$ an arbitrary quantum instrument.
	Then the values of $N(\M,A)$ and $N(\M,B)$ are contained in the noise-noise region
	\begin{align}\label{eqn:NNrelntight}
		R_{NN}(A,B)=&\conv\big\{(s,t) \mid g\left( s \right)^2 + g\left( t \right)^2\notag\\
		& - 2|\vect{a}\cdot\vect{b}|\,g\left( s \right)\,g\left( t \right) \le 1 - (\vect{a}\cdot\vect{b})^2\big\}.		
	\end{align}
\end{theorem}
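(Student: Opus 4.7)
The plan is to chain together the two results that are highlighted immediately before the statement: Proposition~\ref{prop:QubitRNNDef}, which identifies $R_{NN}(A,B)$ with $\conv E(A,B)$ for qubits, and the tight qubit entropic preparation relation~\eqref{eqn:entropicPrepReln} from Ref.~\cite{Abbott:2016aa}, which gives a closed-form description of $E(A,B)$ itself. So conceptually the proof is a one-line substitution, but I would take care to separate the two logically distinct inputs and the role played by each.

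First I would invoke Proposition~\ref{prop:QubitRNNDef} to rewrite $R_{NN}(A,B)$ as the convex hull of the entropic preparation uncertainty region $E(A,B)$ defined in Eq.~\eqref{eqn:EABdefn}. This is the step that translates the problem from optimising over all quantum instruments (which could in principle involve many outcomes, mixed post-measurement states, etc.) to optimising over single-state entropic uncertainties, using only the fact that in dimension $2$ the trace constraint $\sum_m p(m)\rho_m = \id/2$ imposes no loss of generality beyond convex combinations.

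Next, I would identify $E(A,B)$ with the region appearing inside the $\conv$ in the theorem. Concretely, I would quote the result of Ref.~\cite{Abbott:2016aa} that~\eqref{eqn:entropicPrepReln} is a \emph{tight} entropic preparation uncertainty relation for arbitrary Pauli observables, so that
\begin{equation*}
E(A,B) = \bigl\{(s,t) \in [0,1]^2 \,:\, g(s)^2 + g(t)^2 - 2|\vect{a}\cdot\vect{b}|\,g(s)g(t) \le 1 - (\vect{a}\cdot\vect{b})^2\bigr\}.
\end{equation*}
Substituting this description into $R_{NN}(A,B) = \conv E(A,B)$ yields exactly Eq.~\eqref{eqn:NNrelntight}.

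The only substantive things to verify are thus (i) that both the containment and the reverse containment in Proposition~\ref{prop:QubitRNNDef} are genuine equalities in the qubit case—this was already handled in the Appendix as referenced in the text—and (ii) that the cited preparation uncertainty region is characterised \emph{exactly}, not just upper-bounded, by the inequality; this tightness is the content of the result imported from Ref.~\cite{Abbott:2016aa}. No subtle obstacle is expected, since the theorem is essentially a repackaging of those two tight statements; the only care needed is to respect the fact that the relation~\eqref{eqn:entropicPrepReln} defines a (generally non-convex) region, so the $\conv$ in the statement cannot be dropped.
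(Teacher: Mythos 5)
Your proposal matches the paper's own argument: the theorem is obtained precisely by combining Proposition~\ref{prop:QubitRNNDef} ($R_{NN}(A,B)=\conv E(A,B)$ for qubits) with the exact characterisation of $E(A,B)$ by the tight relation~\eqref{eqn:entropicPrepReln} imported from Ref.~\cite{Abbott:2016aa}, and your remarks on where the equality in Proposition~\ref{prop:QubitRNNDef} and the tightness of the preparation relation are needed are exactly the points the paper addresses (the former in the Appendix, the latter via the cited reference and the explicit POVM constructions given after the theorem).
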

Interestingly, the region $E(A,B)$ is non-convex for $|\vect{a}\cdot\vect{b}|\lesssim 0.391$ and convex for $|\vect{a}\cdot\vect{b}|\gtrsim 0.391$~\cite{Sanchez-Ruiz:1998by,Vicente:2008oq,Abbott:2016aa}.
Thus, for $|\vect{a}\cdot\vect{b}|\gtrsim 0.391$, Eq.~\eqref{eqn:NNrelntight} can be expressed explicitly as the \emph{tight} uncertainty relation
\begin{align}\label{eqn:NNrelntightconvex}
	& g\left( N(\M,A) \right)^2 + g\left( N(\M,B) \right)^2 \notag\\
	& - 2|\vect{a}\cdot\vect{b}|\,g\left( N(\M,A) \right)\,g\left( N(\M,B) \right) \le 1 - (\vect{a}\cdot\vect{b})^2.
\end{align}
For $|\vect{a}\cdot\vect{b}|\lesssim 0.391$ no analytic form for the convex hull of $E(A,B)$ exists in general.
However, for $\vect{a}\cdot\vect{b}=0$, i.e., for orthogonal Pauli measurements such as $\sigma_z$ and $\sigma_x$, this can be given explicitly and we have the simple \emph{tight} relation
\begin{equation}\label{eqn:NNboundlinear}
	N(\M,\sigma_z) + N(\M,\sigma_x) \ge 1,
\end{equation}
which is precisely the bound~\eqref{eqn:MUNN} obtained in~\cite{Buscemi:2014aa}.
The region $R_{NN}(A,B)$ is shown in Figure~\ref{fig:NNregion} for two values of $\vect{a}\cdot\vect{b}$, along with the region $E(A,B)$.

\begin{figure*}[ht]
	\begin{center}
		\begin{tabular}{ccc}
			\includegraphics[width=0.45\textwidth]{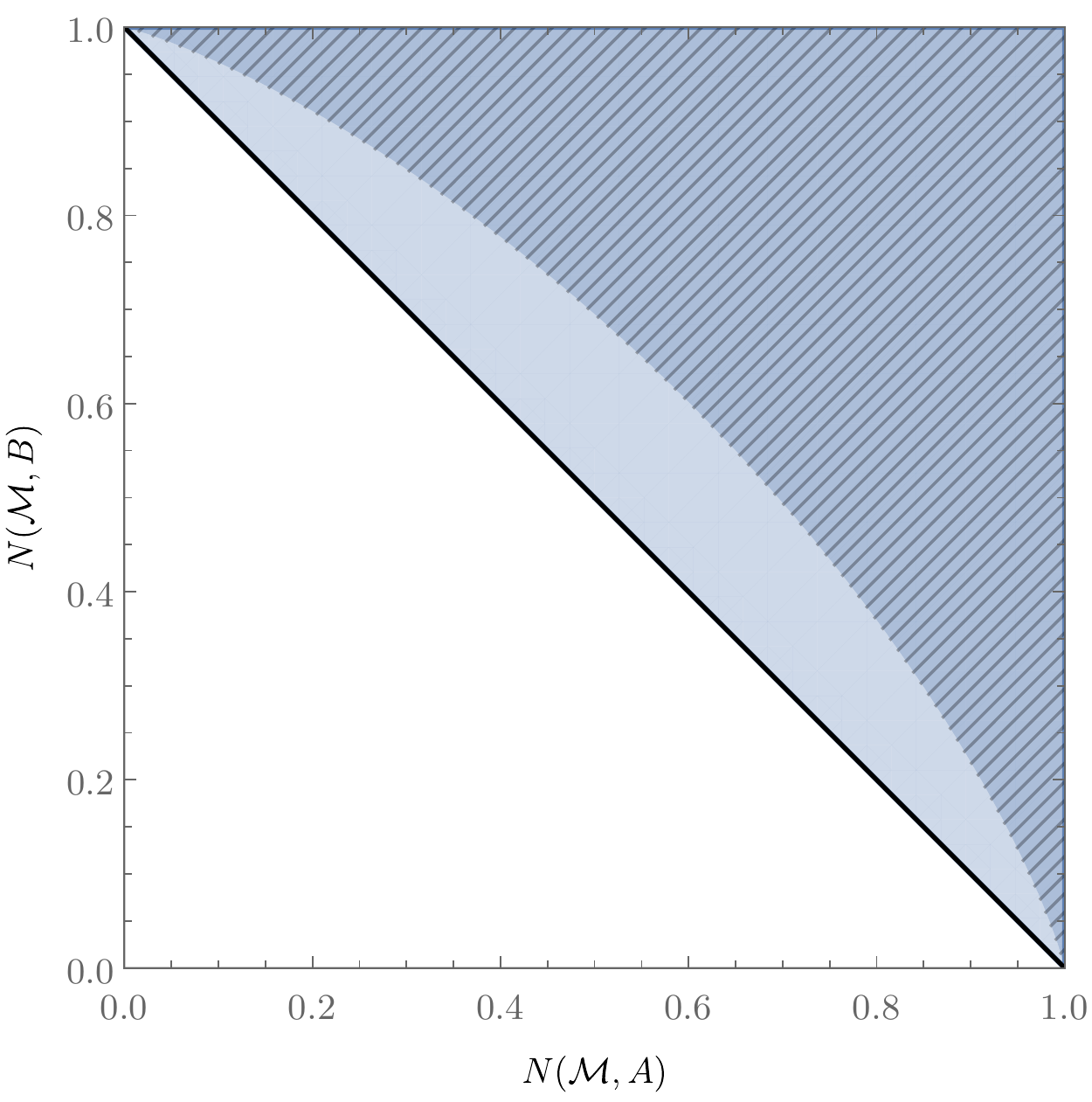}& \qquad\qquad &\includegraphics[width=0.45\textwidth]{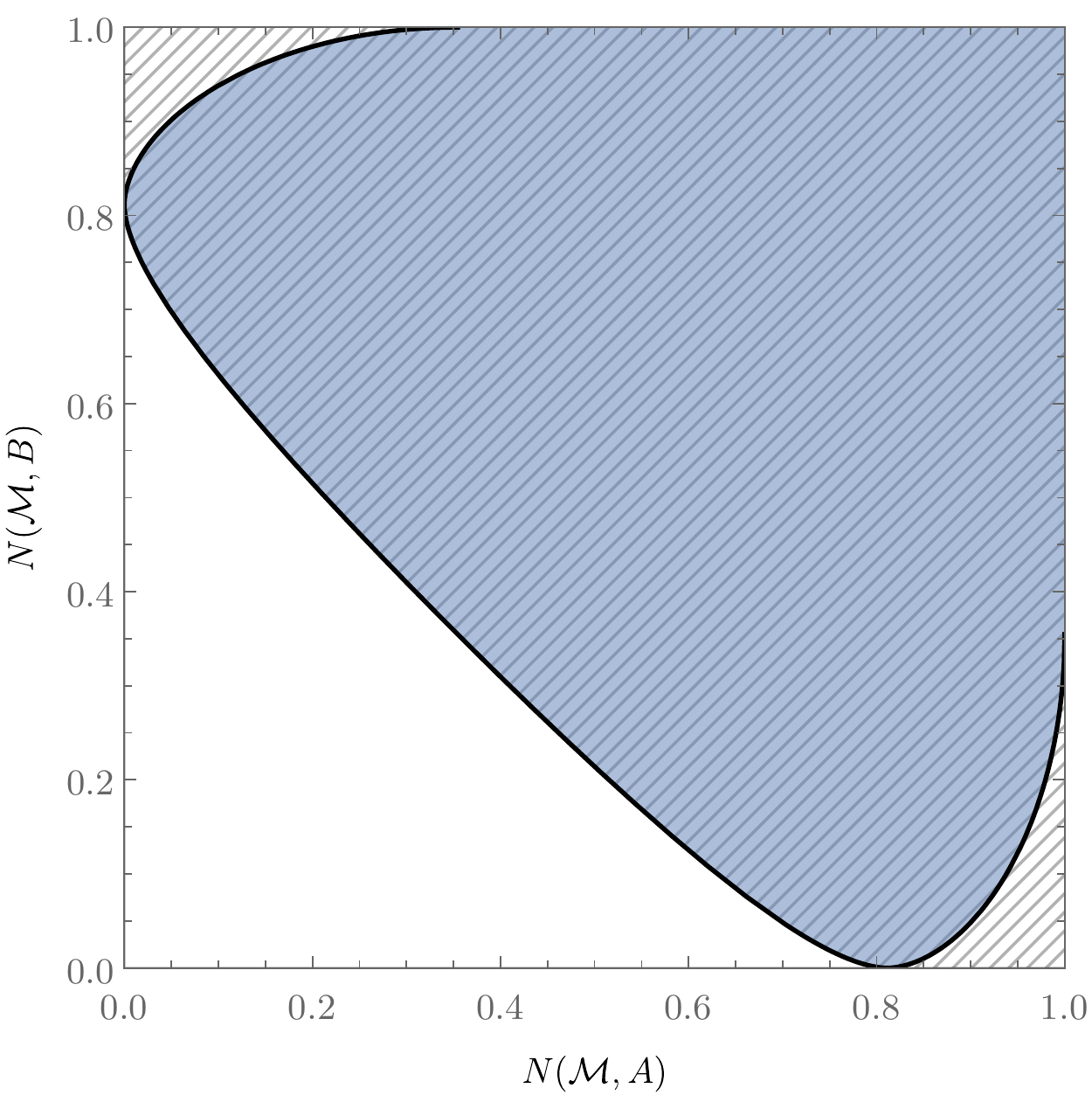}\\
			(a) &  & (b)\\
		\end{tabular}
	\end{center}
	\caption{The shaded area represents the allowable values of $\big(N(\M,A),\, N(\M,B)\big)$ for observables $A=\vect{a}\cdot\bm{\sigma}$ and $B=\vect{b}\cdot\bm{\sigma}$ where (a) $\vect{a}\cdot\vect{b}=0$ and (b) $\vect{a}\cdot\vect{b}=\frac{1}{2}$. The black line represents the uncertainty relation~\eqref{eqn:NNrelntight}, the darker shaded area is the entropic preparation uncertainty region $E(A,B)$ bounded by Eq.~\eqref{eqn:entropicPrepReln} (which reduces to the form of Eq.~\eqref{eqn:NDrelnOrthog} for $\vect{a}\cdot\vect{b}=0$), and the hatched area represents $\cl E(A,B)$. Note that for $\vect{a}\cdot\vect{b}=\frac{1}{2}$ the region $E(A,B)$ is convex and thus $R_{NN}(A,B)=E(A,B)$, whereas $E(A,B)\subsetneq R_{NN}(A,B)$ for the case $\vect{a}\cdot\vect{b}=0$ .}
	\label{fig:NNregion}
\end{figure*}

In order to see that the characterisation of $R_{NN}(A,B)$ given in Eq.~\eqref{eqn:NNrelntight} is indeed tight, one can check that any point $(s,t)\in R_{NN}(A,B)=\conv E(A,B)$ can be obtained by some $\M$.
Let us first consider the case that $(s,t)\in E(A,B)$.
Let $\rho_+=\frac{1}{2}(\id+\vprod{r}{\sigma})$ be a qubit state giving the measurement entropies $(H(A|\rho_+),H(B|\rho_+))=(s,t)$ and $\rho_-=\frac{1}{2}(\id-\vprod{r}{\sigma})$.
Since $H(A|\rho_+)=H(A|\rho_-)$, from Eq.~\eqref{eqn:NoiseConvEntropyRep} we thus have $N(\M,A)=H(A|\rho_+)$, and similarly so for $B$.
Hence, any measurement apparatus $\M$ implementing the POVM $\{\rho_+,\rho_-\}$ has $(N(\M,A)$, $N(\M,B))=(s,t)$, 
as desired.

To show that any point in $R_{NN}(A,B)\setminus E(A,B)$ can also be obtained (which perhaps corresponds to the case of most interest), we need to make use of POVMs with more outcomes.
Since any such point $(s,t)$ is in the convex hull of $E(A,B)$, it can be expressed as a convex combination $q(s_1,t_1) + (1-q)(s_2,t_2)$ of the points $(s_1,t_1),(s_2,t_2)\in E(A,B)$ with $q\in[0,1]$.
Let $\{\rho_{1+},\rho_{1-}\}$ and $\{\rho_{2+},\rho_{2-}\}$ be two POVMs that allow $(s_1,t_1)$ and $(s_2,t_2)$ to be obtained, respectively, as above.
Then an apparatus $\M$ implementing the POVM
\begin{equation}\label{eqn:MsaturatingNN}
	\big\{q\rho_{1+},\ q\rho_{1-},\ (1-q)\rho_{2+},\ (1-q)\rho_{2-}\big\}
\end{equation}
which performs a combination of these two measurements with probabilities $q$ and $(1-q)$, respectively, gives 
\begin{align}
	&(N(\M,A),N(\M,B)) \notag\\ & \qquad = (qu_1 + (1-q)u_2, qv_1 + (1-q)v_2)
	 =(s,t),
\end{align}
thus allowing any point in $R_{NN}(A,B)$ to be realised (in particular, those on its boundary).

The above construction for obtaining points contained in $R_{NN}(A,B)\setminus E(A,B)$ uses four-outcome POVMs, which raises the question of whether the same set of noise-noise values can be obtained if the number of measurement outcomes is restricted.
Below, we will show that dichotomic measurements can only give noise-noise values contained in $\cl(E(A,B))$, 
a realisation that further motivates our investigation, in Section~\ref{sec:NDcharacterisation}, of the form of the noise-disturbance region and whether Eq.~\eqref{eqn:NDrelnOrthog} may hold, at least under certain conditions.
Numerical simulations with random POVMs appear to show that the region of noise-noise values obtainable with three-outcome POVMs lies in between those obtainable with two- and four-outcome POVMs, and thus that four-outcome measurements are indeed required to saturate the noise-noise tradeoff when $E(A,B)$ is not convex, but we leave further clarification of this point to future work.

\subsection{Dichotomic measurements}\label{sec:NNdichotomic}

Let us denote the restriction of the noise-noise region to two-outcome measurements $R^*_{NN}(A,B)$.
In order to find the lower boundary of this region -- and thus tight uncertainty relations on the joint-measurement noise for such measurements -- it is first important to note that the reduction from Proposition~\ref{prop:RNNDef} to Proposition~\ref{prop:QubitRNNDef} for qubits does not hold if the number of outcomes is fixed (cf.\ the proof in the Appendix).
Thus, for dichotomic measurements we must make use of Eq.~\eqref{eqn:NNRegionForm}, with the restriction $\sum_m p(m)\rho_m=\frac{1}{2}\id$.

If we label the measurement outcomes $\pm$ and write $\rho_\pm=\frac{1}{2}(\id+k_\pm\vect{\hat{r}}_\pm\cdot\vect{\sigma})$, for some unit vectors $\vect{\hat{r}}_\pm$ and $k_\pm\in[0,1]$, then this normalisation condition ensures that $\vect{\hat{r}}_+=-\vect{\hat{r}}_-$.
From Eq.~\eqref{eqn:NoiseConvEntropyRep} we see that the noise $N(\M,A)$ then satisfies $N(\M,A)=\sum_m p(m)H(A|\rho_m)\ge H(A|\hat{\rho}_+)=H(A|\hat{\rho}_-)$, where $\hat{\rho}_\pm=\frac{1}{2}(\id + \vect{\hat{r}}_\pm\cdot\vect{\sigma})$, and similarly for $N(\M,B)$.
The region $R^*_{NN}(A,B)$ must therefore be contained in the monotone closure of the entropic uncertainty region $E(A,B)$:
\begin{align}\label{eqn:NNdichotomicRegn}
	R^*_{NN}(A,B) &\subseteq \cl E(A,B).
\end{align}
Combined with the fact that any point in $E(A,B)$, and in particular those on its boundary, can be reached by noise-noise values,%
\footnote{One can indeed easily see that any point $(H(A|\rho),H(B|\rho))$ in $E(A,B)$, for some state $\rho$, can be reached by the values $(N(\M,A),N(\M,B))$ for a dichotomic instrument $\M$ implementing the POVM $\{\rho,\id-\rho\}$.} 
this shows that the lower boundaries of $R^*_{NN}(A,B)$ and $E(A,B)$ coincide.%
\footnote{In fact, one can show the stronger claim that $R^*_{NN}(A,B)=E(A,B)$. To see this, note first that $R^*_{NN}(A,B)\subseteq R_{NN}(A,B)=\conv E(A,B)$. 
From the characterisation of $E(A,B)$ one can further show that $\conv E(A,B)\cap \cl E(A,B)=E(A,B)$ (this can readily be seen to be the case visually, although the formal proof is a little tedious) and one thus has $R^*_{NN}(A,B)\subseteq E(A,B)$.
From the previous footnote, $E(A,B)\subseteq R_{NN}^*(A,B)$, which concludes the proof.} 

For dichotomic measurements, the uncertainty relation~\eqref{eqn:NNrelntightconvex} thus holds and is tight \emph{for all} $\vect{a},\vect{b}$.
For the orthogonal Pauli observables $\sigma_z$ and $\sigma_x$, this reduces to the simple, tight relation (analogous to Eq.~\eqref{eqn:NDrelnOrthog})
\begin{equation}\label{eqn:NNboundDichotomic}
	g(N(\M,\sigma_z))^2 + g(N(\M,\sigma_x))^2 \le 1.
\end{equation}

\section{Noise-disturbance uncertainty relations for orthogonal qubit measurements}\label{sec:NDcharacterisation}

The error in the proof of Eq.~\eqref{eqn:NDrelnOrthog} given in Ref.~\cite{Sulyok:2015aa}, along with the differences between the region defined by this relation and the noise-noise region $R_{NN}(\sigma_z,\sigma_x)$ bounded by Eq.~\eqref{eqn:NNboundlinear}, raises the question of whether the noise-disturbance tradeoff can be decreased below Eq.~\eqref{eqn:NDrelnOrthog}.
In this section we first show that this bound can indeed by violated, before looking at characterising the noise-disturbance region, as well as its form under certain natural restrictions.

\subsection{Violating Eq.~\eqref{eqn:NDrelnOrthog}}\label{sec:counterexample}

Consider the three-outcome measurement $\M^\theta$ with the associated POVM $M^\theta=\{M^\theta_{-1},M^\theta_{0},M^\theta_1\}$ for $\theta\in[0,\pi/2]$, where $M^\theta_m=p_m(\id+\vect{n}_m\cdot\vect{\sigma})$ and $\vect{n}_m=((-1)^m \cos (m\theta),0$, $\sin (m\theta))$, $p_{0}=\frac{\cos\theta}{1+\cos\theta}$ and $p_{-1}=p_1=\frac{1}{2(1+\cos\theta)}$.
One can readily verify that this is a valid POVM.
The probability of obtaining outcome $m$ when measuring a state $\rho$ is thus $\Tr[M_m \rho]$, and we consider the case that, following the measurement, the system is in the pure state $\ket{n_m}$ with Bloch vector $\vect{n}_m$.

From Eq.~\eqref{eqn:NoiseConvEntropyRep} we can calculate the noise on $\sigma_z$ to be
\begin{equation}
	N(\M^\theta,\sigma_z)=\frac{\cos\theta+h(\sin\theta)}{1+\cos\theta}.
\end{equation}

In order to determine an upper bound on the disturbance $D(\M^\theta,\sigma_x)$, let us consider the correction $\mathcal{E}=\{\mathcal{E}_m\}_m$ that leaves the state unchanged on outcome 0, and maps $\vect{n}_{-1}$ and $\vect{n}_1$ onto the negative $x$-axis.
One may implement this with unitary transformations, or, more simply, require that $\mathcal{E}_0(\rho)=\rho$ and $\mathcal{E}_{-1}(\rho)=\mathcal{E}_{1}(\rho)=\frac{1}{2}(\id-\sigma_x)$ for all $\rho$.
From Eqs.~\eqref{eqn:jointDistBB} and~\eqref{eqn:jointDistBBcond} one can then calculate the joint distribution $p(b',b)$ and thus the upper bound on the disturbance as
\begin{equation}
	D_\E(\M^\theta,\sigma_x)=\frac{h(\cos\theta)}{1+\cos\theta}.
\end{equation}

This measurement-correction pair violates Eq.~\eqref{eqn:NDrelnOrthog} for all $\theta\in(0,\pi/2)$.
Taking, for example, $\theta=\frac{\pi}{3}$ we find $g(N(\M^\theta,\sigma_z))^2 \,+\, g(D_\E(\M^\theta,\sigma_x))^2\approx 1.1 > 1$;
thus since $D(\M^\theta,\sigma_x)\le D_\E(\M^\theta,\sigma_x)$ and $g$ is a decreasing function, Eq.~\eqref{eqn:NDrelnOrthog} is clearly violated.

The curve given parametrically by
\begin{align}
(N(\M,\sigma_z),D(\M,\sigma_x))=\left(\tfrac{\cos\theta+h(\sin\theta)}{1+\cos\theta},
 \tfrac{h(\cos\theta)}{1+\cos\theta}\right)
\end{align}
for $0\le\theta\le\frac{\pi}{2}$ is thus an upper bound for the lower boundary of $R_{ND}(\sigma_z,\sigma_x)$.
This bound, which is shown in Fig.~\ref{fig:NDbound}, is asymmetric around the line $N(\M,\sigma_z)=D(\M,\sigma_x)$, in contrast to the tight bounds for the joint-measurement relations shown in Fig.~\ref{fig:NNregion}.

\begin{figure}[t]
	\begin{center}
			\includegraphics[width=\columnwidth]{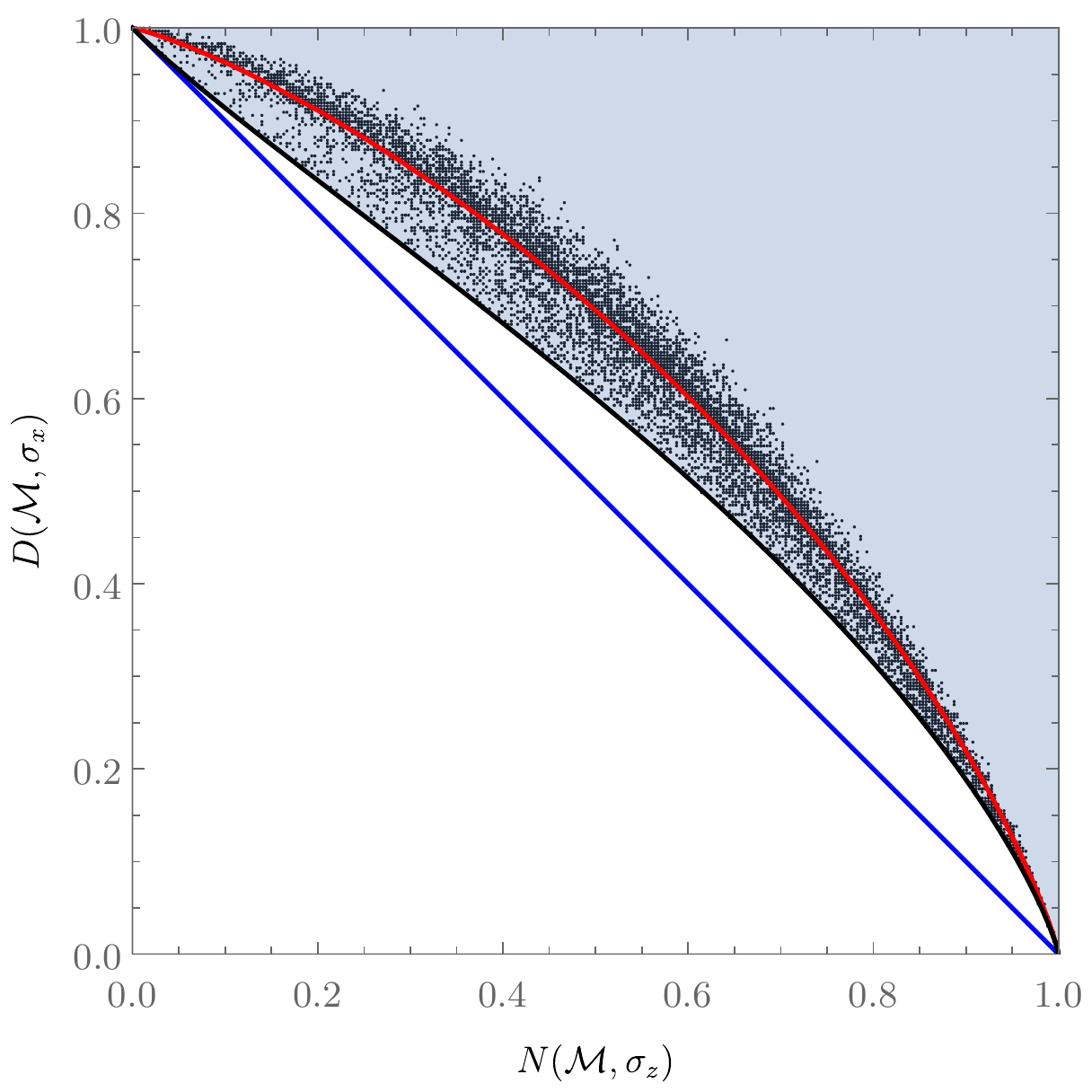}
	\end{center}
	\caption{The shaded area represents the conjectured noise-disturbance region \eqref{eqn:NDconjecturebound}, the lower boundary of which is attained by $\M^\theta$ for $\theta\in[0,\pi/2]$ (black line), as described in the text; the blue line is the noise-noise bound \eqref{eqn:NNboundlinear}; and the red line is the (in general incorrect) bound \eqref{eqn:NDrelnOrthog} from Ref.~\cite{Sulyok:2015aa}. The noise-disturbance points plotted correspond to ten thousand random three- and four-outcome instruments with POVMs restricted to the $xz$-plane and unitary corrections obtained by a heuristic optimisation procedure (see Appendix).}
	\label{fig:NDbound}
\end{figure}

\subsection{Characterising the noise-disturbance region}\label{sec:characterisingRND}

With this proof that Eq.~\eqref{eqn:NDrelnOrthog} can be violated, the problem of characterising precisely the lower boundary of $R_{ND}(\sigma_z,\sigma_x)$ (and, more generally, $R_{ND}(A,B)$) is opened up once more.
While Proposition~\ref{prop:NN-ND-relation} places a lower bound on this tradeoff, there is no immediately obvious way to saturate the boundary of $\cl(R_{NN}(A,B))$ with noise-disturbance values, and the search for a tight characterisation of $R_{ND}(A,B)$ thus requires a careful analytic analysis of the noise-disturbance tradeoff, a problem significantly more complicated than in the noise-noise case.

Perhaps the most immediate problem in attempting such an analysis is the fact that one must minimise over all possible corrections in order to calculate the disturbance for a given measurement.
However, by noting that it is always possible to incorporate the optimal correction $\mathcal{E}$ into the transformation performed by an instrument to yield another valid instrument, we see that, for any instrument $\M$, there is another instrument $\M'$ such that $N(\M,A)=N(\M',A)$ and $D(\M,B)=D_\mathcal{I}(\M',B)$, where $\mathcal{I}$ is the identity correction, and thus represents the equivalent case where no correction is applied.
If we define the noise-disturbance region of this restricted no-correction scenario as
\begin{align}
	R_{ND_\mathcal{I}}(A,B)&=\big\{ \big( N(\M,A),\, D_\mathcal{I}(\M,B)\big) \mid \notag\\
	& \qquad \text{$\M$ is a quantum instrument} \big\},
\end{align}
we therefore have $R_{ND}(A,B)\subseteq R_{ND_\mathcal{I}}(A,B)$ and hence $\cl R_{ND}(A,B)\subseteq \cl R_{ND_\mathcal{I}}(A,B)$ also.
Moreover, since $D_\mathcal{I}(\M,B)\ge D(\M,B)$ we also see that $R_{ND_\mathcal{I}}(A,B) \subseteq \cl R_{ND}(A,B)$.
Finally, by noting that $R_{ND_\mathcal{I}}(A,B)=\cl R_{ND_\mathcal{I}}(A,B)$ (since, when no correction is applied, one can always add noise to an instrument without increasing $D_\mathcal{I}(\M,B)$, or conversely mix the outgoing state with the completely mixed state to increase $D_\mathcal{I}(\M,\sigma_x)$ without increasing the noise) and putting these steps together we find that
\begin{equation}\label{eqn:NDnoCorrCharac}
	\cl R_{ND}(A,B)=R_{ND_\mathcal{I}}(A,B).
\end{equation}  
The lower boundaries of these region thus coincide, and we can restrict ourselves to considering the uncorrected disturbance $D_\mathcal{I}(\M,B) = H(\mathbb{B}|\mathbb{B}_{\M,\mathcal{I}}')$ in order to characterise the noise-disturbance tradeoff.\footnote{However, in general $\cl(R_{ND}(A,B)) \neq R_{ND}(A,B)$, and thus $R_{ND}(A,B)\neq R_{ND_\mathcal{I}}(A,B)$ (although the two sets may coincide in some particular cases, like for orthogonal $A$ and $B$). To see this, note for instance that if $A=B$ one can have $(N(\M,A),D(\M,B))=(0,\delta)$ only if $\delta=0$, while $(N(\M,A),$ $D_\mathcal{I}(\M,B))=(0,\delta)$ can be obtained for any $\delta\in[0,1]$.}

The problem is still rather complicated since, for every POVM $M$, one must consider all possible transformations that can be performed by the instrument, each giving rise to a different instrument $\M=\{\M_m\}_m$.
If the Kraus operators corresponding to each $\M_m$ are $\{K_{m,i}\}_{i=1}^{n_m}$ (so that $\M_m(\rho)=\sum_{i=1}^{n_m}K_{m,i}\rho K_{m,i}^\dagg$), then the problem can be further simplified by noting that we can write $K_{m,i}=U_{m,i}\sqrt{M_{m,i}}$ where $U_{m,i}$ is a unitary and $\sqrt{M_{m,i}}$ is the (unique) positive semidefinite Hermitian root of $M_{m,i}=K_{m,i}^\dagg K_{m,i}$.
We can then consider another instrument $\M'$ with $\sum_{m}n_m$ outcomes such that $\M'_{m,i}(\rho)=K_{m,i}\rho K_{m,i}^\dagg$; i.e., each outcome is associated with a POVM element $M'_{m,i} = M_{m,i}$ and the corresponding transformation has a single Kraus operator $K_{m,i}$.
Such an instrument is said to be \emph{purity preserving}.
Note that this instrument can be interpreted as a L\"{u}ders instrument with an additional unitary correction $U_{m,i}$ applied depending on the measurement outcome (cf.\ Sec.~\ref{sec:NDLuders}).
If we define the restriction of the noise-disturbance region to purity-preserving instruments as
\begin{align}
	&R_{ND_\mathcal{I}}^{PP}(A,B)=\big\{ \big( N(\M,A),\, D_\mathcal{I}(\M,B)\big) \mid \notag\\
	& \quad \text{$\M$ is a purity-preserving quantum instrument} \big\},
\end{align}
then we clearly have $R_{ND_\mathcal{I}}^{PP}(A,B) \subseteq R_{ND_\mathcal{I}}(A,B)$, since these are just a subset of all instruments, and thus also $\cl R_{ND_\mathcal{I}}^{PP}(A,B) \subseteq R_{ND_\mathcal{I}}(A,B)$.
Conversely, we see that $D_\mathcal{I}(\M,B)=D_\mathcal{I}(\M',B)$ since we constructed $\M'$ such that $\M'(\oprod{\pm b}{\pm b})=\M(\oprod{\pm b}{\pm b})$, and, moreover, the fact that the POVM $\{M_m\}_m$ is simply a coarse graining of $\{M'_{m,i}\}_{m,i}$ implies that, from the definition of noise and a simple application of the classical data-processing inequality, $N(\M,A)\ge N(\M',A)$.
It thus follows that $R_{ND_\mathcal{I}}(A,B) \subseteq \cl R_{ND_\mathcal{I}}^{PP}(A,B)$ and hence these two sets are the same.
Combining this with Eq.~\eqref{eqn:NDnoCorrCharac} we have
\begin{equation}
	\cl R_{ND}(A,B)=\cl R_{ND_\mathcal{I}}^{PP}(A,B),
\end{equation}
and thus the lower boundaries of these regions coincide.

For purity-preserving qubit instruments the calculation of $D_\mathcal{I}(\M,B)$ is somewhat simplified, and in the Appendix we give an analytic formula for it in terms of the POVM elements $M_m$ and the unitaries $U_m$.
The noise-disturbance tradeoff can thus be characterised by considering all POVMs $M=\{M_m\}_m$ and the unitaries $\{U_m\}_m$ that minimise $D_\mathcal{I}(\M,B)$ for each such POVM when $\M$ performs the transformation above.
Unfortunately, there does not appear to be any simple way to analytically determine the optimal such unitaries, and as a result we were not able to prove a tight bound for $R_{ND}(A,B)$, even for the case of orthogonal $A$ and $B$.

From now on we will pursue just this case of orthogonal Pauli observables, fixing $A=\sigma_z$ and $B=\sigma_x$ and leaving the more general case to future work.
Despite our inability to analytically characterise $R_{ND}(\sigma_z,\sigma_x)$, it is possible to study its form via numerical simulations by testing large numbers of randomly generated quantum instruments.
Na\"{i}vely generating such instruments generally results in most instruments being far from the boundary of the region.
However, making use of the above simplifications it is possible to do much better by randomly generating POVMs (rather than instruments) and using numerical approaches to finding the (or close to the) optimal set of unitaries for each POVM.
Some care nonetheless still needs to be made in choosing the distribution from which to draw POVMs from, and further details of our approach are given in the Appendix.

We performed extensive such numerical simulations for measurements with 2 to 6 outcomes, and the results of some of these (for 3 and 4 outcome measurements) are shown in Fig.~\ref{fig:NDbound}.
Our results suggest that the bound obtained from the counter-example in the previous section is in fact tight, as not a single instrument violating it was found.
We thus formulate the following conjecture. 
\begin{conjecture}
	Let $\M$ be an arbitrary quantum instrument for qubits.
	Then the values of $N(\M,\sigma_z)$ and $ $ $D(\M,\sigma_x)$ are contained in the noise-disturbance region
	\begin{align}\label{eqn:NDconjecturebound}
		R_{ND}(\sigma_z,\sigma_x) = &  \cl\left\{\left(\tfrac{\cos\theta+h(\sin\theta)}{1+\cos\theta},\tfrac{h(\cos\theta)}{1+\cos\theta}\right) \big|\  0\le \theta \le \pi/2\right\}.
	\end{align}
\end{conjecture}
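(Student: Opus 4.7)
The plan is to build on the reduction established just before the conjecture: one has $\cl R_{ND}(\sigma_z,\sigma_x) = \cl R_{ND_\mathcal{I}}^{PP}(\sigma_z,\sigma_x)$, so it suffices to consider purity-preserving instruments $\M$ with no correction, each $\M_m$ being determined by a POVM element $M_m$ and a unitary $U_m$ via $\M_m(\rho) = U_m\sqrt{M_m}\,\rho\,\sqrt{M_m}\,U_m^\dagg$. The noise $N(\M,\sigma_z)$ depends only on the POVM (through Eq.~\eqref{eqn:NoiseConvEntropyRep}), while $D_\mathcal{I}(\M,\sigma_x)$ can be written analytically in terms of the $M_m$ and the $U_m$ using the formula to be given in the Appendix. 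The goal is then to prove, for every choice of POVM and unitaries, that $(N(\M,\sigma_z), D_\mathcal{I}(\M,\sigma_x))$ lies on or above the parametric curve of Eq.~\eqref{eqn:NDconjecturebound}.

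First, I would apply symmetry reductions. Conjugation by rotations about the $y$-axis and by the reflection $\sigma_x\mapsto-\sigma_x$ (combined with outcome relabellings) leave both $N(\M,\sigma_z)$ and $D_\mathcal{I}(\M,\sigma_x)$ invariant. Using these I would argue that the lower boundary of $R_{ND_\mathcal{I}}^{PP}$ is attained by POVMs whose Bloch vectors $\vect{n}_m$ all lie in the $xz$-plane and whose $U_m$ preserve this plane. I would then invoke the classical data-processing inequality (noise decreases under coarse-graining) together with a convex-combination construction as in the noise-noise saturation argument to reduce, if possible, to POVMs with at most three outcomes, consistent with the numerical evidence in Fig.~\ref{fig:NDbound}.

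Next, for three-outcome POVMs in the $xz$-plane, parametrised by Bloch vectors $\vect{n}_{-1},\vect{n}_0,\vect{n}_1$ and weights $p_m$, I would set up the constrained minimisation of $D_\mathcal{I}(\M,\sigma_x)$ at fixed $N(\M,\sigma_z)$. Applying Lagrange multipliers to the analytic disturbance formula, one obtains stationarity conditions for the $p_m$, the $\vect{n}_m$, and the $U_m$. The intended payoff is to verify that the family $\M^\theta$ of Sec.~\ref{sec:counterexample}, with the optimal unitaries being precisely those rotating $\vect{n}_{\pm1}$ onto $-\hat{x}$ and fixing $\vect{n}_0$, satisfies these conditions and sweeps out the conjectured boundary~\eqref{eqn:NDconjecturebound} as $\theta$ ranges over $[0,\pi/2]$. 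Taking monotone closure then recovers the full region.

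The hard part, as the text already warns, is the joint optimisation over unitaries and POVMs: there is no obvious analytic expression for the $U_m$ that minimise $D_\mathcal{I}$ given a POVM, so even the inner minimisation is nontrivial. A promising route would be to derive a tight single-outcome entropic inequality relating the transition probabilities $\Tr[\oprod{\pm x}{\pm x}\M_m(\oprod{\pm x}{\pm x})]$ to the $\sigma_z$-spectrum of $M_m$ (analogous in spirit to the tight qubit preparation relation of Ref.~\cite{Abbott:2016aa} used for Theorem~\ref{thm:NNrelnTight}), and then lift it convexly over outcomes. Even assuming such an inequality, one would still need to rule out improvements from POVMs with four or more outcomes or from Bloch vectors leaving the $xz$-plane, and it is precisely the lack of a convex structure on $R_{ND}$ (contrasting with $R_{NN}$) that makes the final step genuinely delicate.
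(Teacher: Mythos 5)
There is a fundamental mismatch here: the statement you are trying to prove is presented in the paper as a \emph{Conjecture}, and the paper itself offers no proof of it. What the paper establishes is only (i) achievability of the parametric curve by the three-outcome family $\M^\theta$ of Sec.~\ref{sec:counterexample} together with the correction mapping $\vect{n}_{\pm 1}$ onto the negative $x$-axis, and (ii) numerical evidence from sampling purity-preserving instruments (via the reduction $\cl R_{ND}=\cl R_{ND_\mathcal{I}}^{PP}$ and the Appendix formula for $D_\mathcal{I}$) that no instrument falls below the curve. The conclusion section states explicitly that proving the bound remains open. Your proposal correctly reconstructs the paper's reductions and correctly identifies the extremal family, but everything beyond that is a plan rather than an argument: the restriction to the $xz$-plane, the reduction to three outcomes, the Lagrange-multiplier analysis, and the hypothesised single-outcome entropic inequality are all left unexecuted, so no proof of the lower bound is actually given.

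Two of your proposed steps face concrete obstructions worth naming. First, your suggestion to derive a per-outcome inequality and ``lift it convexly over outcomes'' cannot work as stated: any bound obtained by weighting single-outcome contributions with the $p_m$ and invoking convexity yields a convex constraint on $(N,D)$, whereas the conjectured region~\eqref{eqn:NDconjecturebound} is non-convex (as the paper emphasises in contrast with $R_{NN}$), so such an argument could at best establish the convex hull of the curve, which lies strictly below the conjectured boundary in places. This is precisely why the L\"uders-instrument argument of Theorem~\ref{thm:sqrtdynamics}, which does proceed by a convexity lift (Fact~\ref{convexityLemma}), lands on the convex set $\cl E(\sigma_z,\sigma_x)$ rather than on the conjectured boundary. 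Second, verifying that $\M^\theta$ satisfies first-order stationarity conditions would not close the argument: the optimisation over POVMs and unitaries is non-convex, so stationarity does not imply global optimality, and you would additionally need to exclude four-or-more-outcome POVMs and out-of-plane Bloch vectors --- the coarse-graining/data-processing argument you cite only shows that refining outcomes cannot increase the noise, it does not cap the number of outcomes needed (indeed four outcomes are required in the noise-noise case). As it stands, your proposal reproduces the paper's framework but not a proof, and the essential step --- that no instrument lies below the curve --- remains exactly as open as the paper says it is.
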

This conjecture, if correct, would be surprising since it would indicate that, in stark contrast to the case of joint-measurement noise, three-outcome measurements are sufficient to completely saturate the noise-disturbance bound, and could thus be said to be optimal in this respect.

\subsection{Dichotomic measurements}\label{sec:NDdichotomic}

While we found that it was possible to saturate the conjectured bound for $R_{ND}(\sigma_z,\sigma_x)$ with measurements with three or more outcomes, there seemed no apparent way to do so with dichotomic measurements, and thus it seems that (at least) three outcome measurements are not only sufficient but also necessary to saturate the noise-disturbance tradeoff.
Given the fact that the noise-noise region is bounded by Eq.~\eqref{eqn:NNboundDichotomic} along with the relation given in Proposition~\ref{prop:NN-ND-relation}, one may be tempted to think that the restriction of the noise-disturbance region to dichotomic measurements, which we denote $R^*_{ND}(A,B)$, must satisfy $R_{ND}^*(A,B)\subseteq \cl R_{NN}^*(A,B) \subseteq \cl E(A,B)$ and thus that Eq.~\eqref{eqn:NDrelnOrthog} holds for dichotomic measurements.
Indeed, in a recent erratum~\cite{Sulyok:2016aa} acknowledging the error in their proof~\cite{Sulyok:2015aa} of Eq.~\eqref{eqn:NDrelnOrthog}, the authors prove that this is the case for the subset of dichotomic measurements that are of the `measure-and-prepare' form, which includes the measurements performed in their experimental tests of Eq.~\eqref{eqn:NDrelnOrthog}.

However, the argument used to prove Proposition~\ref{prop:NN-ND-relation} does not hold if the number of outcomes is fixed (cf. the discussion in the Appendix), so such reasoning would be premature.
A more careful analysis showed that it is in fact possible to violate Eq.~\eqref{eqn:NDrelnOrthog} with carefully chosen dichotomic measurements and corrections.
Specifically, consider the POVM $M=\{M_\pm\}_\pm$ with $M_+=\frac{1}{4}(\id + \frac{1}{\sqrt{2}}(\sigma_x+\sigma_z))$ and $\M_-=\id-M_+$, and the associated instrument $\M$ implementing the transformation $\M_\pm(\rho)=\sqrt{M_\pm}\rho \sqrt{M_\pm}$, and consider a correction $\E_+(\rho)=\oprod{x}{x}$ applied on outcome `$+$' (and no correction for the other outcome).
From Eqs.~\eqref{eqn:NoiseConvEntropyRep}, \eqref{eqn:jointDistBB} and~\eqref{eqn:jointDistBBcond} one can calculate that $N(\M,\sigma_z)\approx 0.870$ and $D_\E(\M,\sigma_x)\approx 0.255$ which gives $g(N(\M,\sigma_z))^2+g(D_\E(\M,\sigma_x))^2\approx 1.011 > 1$.
By considering different instruments and optimising over corrections we were able to do marginally better than this, although the instruments and corrections doing so are not particularly informative; the best violation of Eq.~\eqref{eqn:NDrelnOrthog} we found numerically gave $g(N(\M,\sigma_z))^2+g(D_\E(\M,\sigma_x))^2\approx 1.024$. 

Although such a violation is rather small it is still perhaps surprising, given the results for the noise-noise case and for measure-and-prepare instruments~\cite{Sulyok:2016aa}, that $R^*_{ND}(\sigma_z,\sigma_x)\neq R^*_{NN}(\sigma_z,\sigma_x)$.
Figure~\ref{fig:NDdichotomic} shows the results of numerical simulations with dichotomic measurements in relation to the bounds~\eqref{eqn:NDrelnOrthog} and~\eqref{eqn:NDconjecturebound}.
One can see that the lower boundary of $R^*_{ND}(\sigma_z,\sigma_x)$ appears to be only slightly below that of $R^*_{NN}(\sigma_z,\sigma_x)$.

\begin{figure}[t]
	\begin{center}
			\includegraphics[width=\columnwidth]{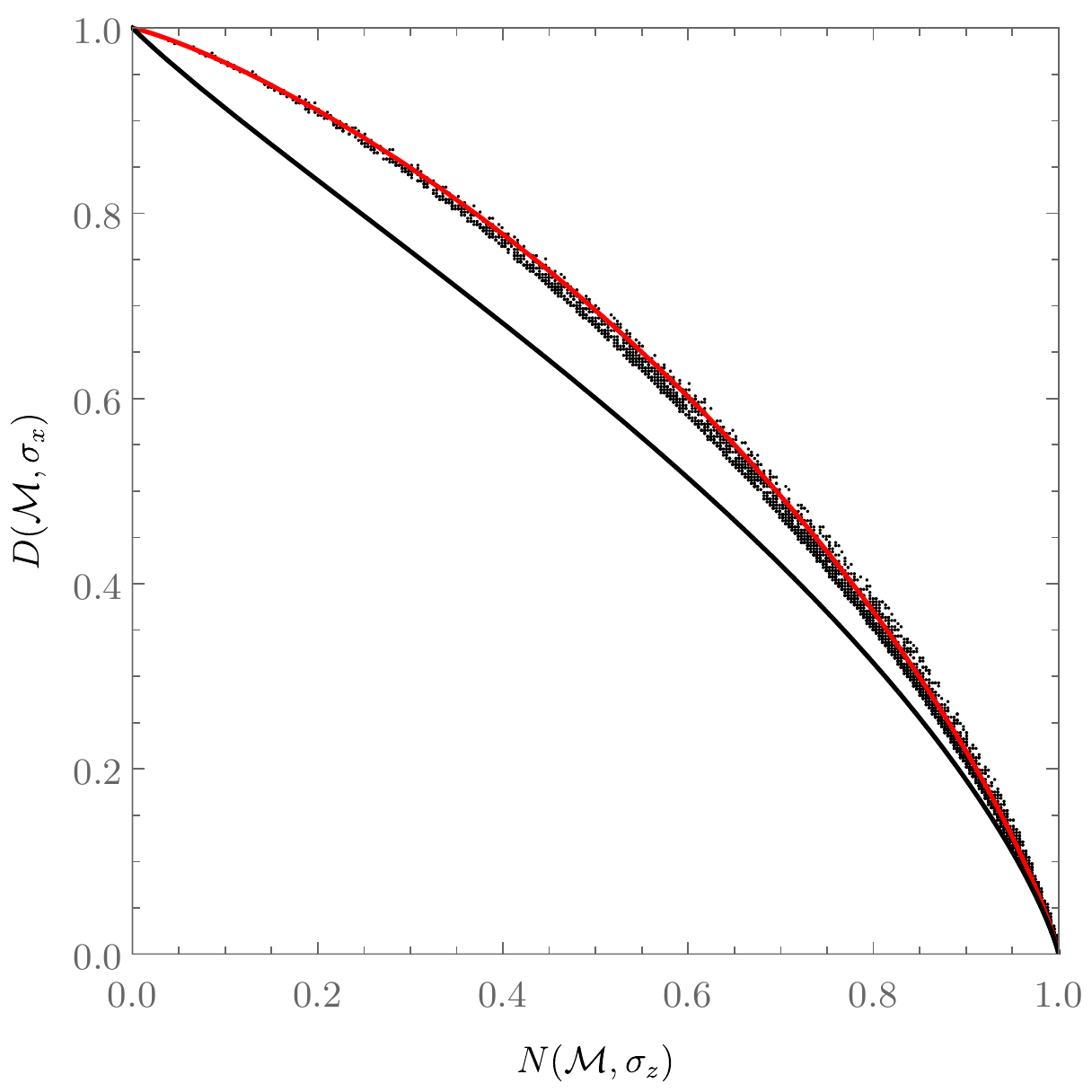}
	\end{center}
	\caption{The points plotted correspond to the values of $N(\M,\sigma_z)$ and $D_\E(\M,\sigma_x)$ for ten thousand random dichotomic instruments with POVMs restricted to the $xz$-plane and numerically optimised unitary corrections applied (see Appendix). The curves correspond to the violated bound \eqref{eqn:NDrelnOrthog} (red line) and the conjectured lower bound of $R_{ND}(\sigma_z,\sigma_x)$ (black line).}
	\label{fig:NDdichotomic}
\end{figure}

\subsection{Noise-disturbance relations for L\"{u}ders instruments}\label{sec:NDLuders}

Although Eq.~\eqref{eqn:NDrelnOrthog} does not hold in general, our simulations showed that it required carefully chosen post-measurement corrections in order to violate it.
In this section we go further and show that it is in fact valid for an interesting class of measurements, in which $\M$ is a ``L\"{u}ders instrument''~\cite{Luders:1951aa} that updates the state according to the so-called ``square-root dynamics''~\cite{Barnum:2000aa}, and no further correction is applied (i.e., when one considers $D_\mathcal{I}(\M,\sigma_x)$ instead of $D(\M,\sigma_x)$).\footnote{Note that the measurements performed by Ref.~\cite{Sulyok:2015aa} saturating Eq.~\eqref{eqn:NDrelnOrthog} were not implemented by L\"{u}ders instruments, as a non-trivial correction was used.}
A measurement instrument $\M$ with associated POVM $\{M_m\}_m$ is a L\"{u}ders instrument if the state is updated according to $\M_m(\rho)=\sqrt{M_m}\rho\sqrt{M_m}$. 
Such measurements can be seen as a generalisation of standard projective measurements~\cite{Luders:1951aa}, and correspond to many realistic experimental situations.

Let $M=\{M_m\}_m$ be an arbitrary qubit POVM as before.
Then we can write each $M_m$ as
\begin{equation}\label{eqn:genPOVMform}
	M_m = p_m(\id + k_m\vect{n}_m\cdot\vect{\sigma}),
\end{equation}
where $|\vect{n}_m|=1$, $p_m\ge 0$ and $|k_m|\le 1$.
The normalisation of $M$, i.e.\ $\sum_m M_m = \id$, is then expressed by the conditions $\sum_m p_m = 1$ and $\sum_m p_m k_m \vect{n}_m = \vect{0}$.

Using this representation we find that the noise $N(\M,\sigma_z)$ for any instrument $\M$ realising the POVM $M$ can be expressed as 
\begin{equation}\label{eqn:NParameterised}
	N(\M,\sigma_z)=\sum_m p_m h\left(|k_m\vect{n}_m\cdot\vect{z}| \right).
\end{equation}

In order to calculate $D_\mathcal{I}(\M,\sigma_x)$ we must first calculate the average post-measurement state
\begin{equation}
	\rho_+ = \M(\oprod{x}{x})=\frac{1}{2}(\id+\vect{r}_+\cdot\vect{\sigma}),
\end{equation}
as well as the similarly defined $\rho_- = \frac{1}{2}(\id+\vect{r}_-\cdot\vect{\sigma})$ for the input $\ket{-x}$.
For a L\"{u}ders instrument, $\M(\oprod{x}{x})=\sum_m\sqrt{M_m}\oprod{x}{x}\sqrt{M_m}$ and we find that
\begin{align}\label{eqn:rvectorFull}
	\vect{r}_\pm = \pm \sum_m p_m  &\left( (\vect{n}_m\cdot\vect{x})\vect{n}_m \vphantom{\sqrt{1-k_m^2}}\right.\notag\\[-3mm]
	&\ \left. + \sqrt{1-k_m^2}(\vect{x}-(\vect{n}_m\cdot\vect{x})\vect{n}_m) \right)
\end{align}
and
\begin{equation}\label{eqn:DParameterised}
	D_\mathcal{I}(\M,\sigma_x)=h(|\vect{r}_+\cdot\vect{x}|).
\end{equation}

We will make use of the following fact, which can easily be verified, to show that a L\"{u}ders instrument, for which the restricted definition of disturbance $D_\mathcal{I}(\M,\sigma_x)$ is employed, must obey Eq.~\eqref{eqn:NDrelnOrthog}.
\begin{fact}\label{convexityLemma}
	The function $f(x)=h(\sqrt{1-x^2})$ is convex on $[0,1]$.
\end{fact}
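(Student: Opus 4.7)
The plan is to verify convexity directly by computing $f''(x)$ on the open interval $(0,1)$ and showing it is nonnegative; since $f$ is continuous on $[0,1]$, convexity on $(0,1)$ extends automatically to the closed interval, so no separate endpoint analysis is needed.

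First I would compute the derivatives of $h$ alone: expanding and simplifying the definition of $h$ gives $h'(y) = \tfrac{1}{2}\log\tfrac{1-y}{1+y}$ and $h''(y) = -\tfrac{1}{(1-y^2)\ln 2}$. Setting $y = \sqrt{1-x^2}$ with $y'(x) = -x/y$ and $y''(x) = -1/y^3$, the chain rule yields
\[
 f''(x) = h''(y)\,(y'(x))^2 + h'(y)\,y''(x).
\]
Substituting and using $x^2 = 1-y^2$, the two terms combine into
\[
 f''(x) = \frac{1}{y^3 \ln 2}\left(\tfrac{1}{2}\ln\tfrac{1+y}{1-y} - y\right).
\]

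The final step is the scalar inequality $\tfrac{1}{2}\ln\tfrac{1+y}{1-y} = \operatorname{arctanh}(y) = \sum_{k\ge 0} \tfrac{y^{2k+1}}{2k+1} \ge y$ for all $y \in [0,1)$, which immediately gives $f''(x) \ge 0$ on $(0,1)$ and hence convexity on $[0,1]$. (Alternatively, one can argue $\operatorname{arctanh}(y) \ge y$ by noting that the derivative of $\operatorname{arctanh}(y) - y$ equals $\tfrac{y^2}{1-y^2} \ge 0$, with equality only at $y=0$.)

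The main obstacle is just bookkeeping: individually, both $h''(y)(y'(x))^2$ and $h'(y)\,y''(x)$ are singular at $y=0$ (i.e.\ at $x=1$), so one must combine them algebraically before taking the limit to see that $f''$ is actually well-defined and nonnegative there. The Taylor expansion of $\operatorname{arctanh}$ makes this cancellation transparent and produces the desired inequality in a single step, so the verification is truly elementary, matching the claim in the text that it "can easily be verified."
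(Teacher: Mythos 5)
Your computation is correct: the derivatives of $h$, the chain-rule combination, and the resulting expression $f''(x)=\frac{1}{y^{3}\ln 2}\bigl(\operatorname{arctanh}(y)-y\bigr)\ge 0$ with $y=\sqrt{1-x^{2}}$ all check out, and the passage from convexity on $(0,1)$ to $[0,1]$ via continuity is standard. The paper itself offers no proof of this Fact (it only remarks that it "can easily be verified"), so your direct second-derivative argument, including the careful cancellation of the individually singular terms at $y=0$, is a complete and appropriate way to supply the missing verification.
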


\begin{theorem}\label{thm:sqrtdynamics}
	Let $\M$ be a L\"{u}ders instrument for qubits.
	Then the following tight relation holds:
	\begin{equation}\label{eqn:NDsqrtdynamics}
		g(N(\M,\sigma_z))^2 + g(D_\mathcal{I}(\M,\sigma_x))^2 \le 1.
	\end{equation}
\end{theorem}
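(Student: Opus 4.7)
The plan is to find a single auxiliary quantity $\xi_m:=\sqrt{1-k_m^2(\vect{n}_m\cdot\vect{z})^2}$ that upper-bounds $g(D_\mathcal{I}(\M,\sigma_x))$ while lower-bounding $\sqrt{1-g(N(\M,\sigma_z))^2}$, and then to compare the two.

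First I would rewrite each term in Eq.~\eqref{eqn:NParameterised} as $h(|k_m\vect{n}_m\cdot\vect{z}|)=f(\xi_m)$ with $f(x):=h(\sqrt{1-x^2})$, so that $\xi_m\in[0,1]$. Fact~\ref{convexityLemma} states that $f$ is convex on $[0,1]$, and a short direct check shows $f$ is increasing with $f^{-1}(y)=\sqrt{1-g(y)^2}$. Jensen's inequality then gives $f(\sum_m p_m\xi_m)\le\sum_m p_m f(\xi_m)=N(\M,\sigma_z)$, and applying the (increasing) $f^{-1}$ yields
\begin{equation}\label{eqn:planJensen}
\sum_m p_m\,\xi_m \,\le\, \sqrt{1-g(N(\M,\sigma_z))^2}.
\end{equation}

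Next, using Eqs.~\eqref{eqn:rvectorFull} and~\eqref{eqn:DParameterised}, $g(D_\mathcal{I}(\M,\sigma_x))=\vect{r}_+\cdot\vect{x}$ (which is manifestly nonnegative since every summand in $\vect{r}_+\cdot\vect{x}$ is). To bound this by the same auxiliary quantity term by term, I would set $\alpha_m=\vect{n}_m\cdot\vect{x}$, $\gamma_m=\vect{n}_m\cdot\vect{z}$, $s_m=\sqrt{1-k_m^2}$, and verify
\begin{equation}
s_m+(1-s_m)\alpha_m^2 \,\le\, \sqrt{1-k_m^2\gamma_m^2}=\xi_m.
\end{equation}
Since both sides are nonnegative, squaring and factoring out $(1-s_m)\ge 0$ reduces this to $(1+s_m)(1-\gamma_m^2)\ge 2s_m\alpha_m^2+(1-s_m)\alpha_m^4$; this then follows at once from the unit-vector constraint $\alpha_m^2+\gamma_m^2\le 1$ (giving $1-\gamma_m^2\ge \alpha_m^2$) together with $\alpha_m^4\le\alpha_m^2$. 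Summing over $m$ with weights $p_m$ gives $g(D_\mathcal{I}(\M,\sigma_x))\le \sum_m p_m\xi_m$, which combined with~\eqref{eqn:planJensen} proves~\eqref{eqn:NDsqrtdynamics}. For tightness I would simply exhibit the family $M_\pm=\tfrac{1}{2}(\id\pm k\sigma_z)$ for $k\in[0,1]$, implemented as a L\"uders instrument: one computes $g(N(\M,\sigma_z))=k$ and $g(D_\mathcal{I}(\M,\sigma_x))=\sqrt{1-k^2}$, saturating~\eqref{eqn:NDsqrtdynamics} along an entire arc.

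The main obstacle I anticipate is not the computations themselves but the conceptual step of identifying the correct bridging quantity $\xi_m$. Once it is in hand, Fact~\ref{convexityLemma} is applicable in exactly the regime it was designed for, and the per-$m$ inequality on the disturbance side falls out from $\vect{n}_m$ being a unit vector. A more direct alternative --- applying Jensen via the convexity of $g$ to get $g(N(\M,\sigma_z))\le\sum_m p_m k_m|\vect{n}_m\cdot\vect{z}|$ and then combining with a Minkowski-type inequality on $\vect{r}_+$ --- also seems to work, but requires verifying an analogous per-outcome bound with less transparent bookkeeping, so routing both inequalities through $\xi_m$ appears preferable.
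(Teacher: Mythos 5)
Your proof is correct and follows essentially the same route as the paper's: both arguments reduce the claim to a per-outcome scalar bound combined with Jensen's inequality applied to the convex function $f(x)=h(\sqrt{1-x^2})$ of Fact~\ref{convexityLemma}, the only differences being that the paper's bridging quantity is $|\vect{r}_m|$ (with $\vect{r}_m\cdot\vect{x}\le|\vect{r}_m|\le\xi_m$, your inequality being a slightly looser but still sufficient version verified by direct algebra) and that the paper packages the conclusion as membership in $\cl E(\sigma_z,\sigma_x)$ via a common state $\rho_u$. Your tightness witness $M_\pm=\tfrac{1}{2}(\id\pm k\sigma_z)$ saturates the bound along the whole boundary arc and so suffices, though the paper's family additionally shows every interior point of the region is attained.
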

\begin{proof}
	Let us write the $\vect{r}_+$ above as $\vect{r}_+=\sum_m p_m \vect{r}_m$, where 
	\begin{equation}
		\vect{r}_m = (\vect{n}_m\cdot\vect{x})\vect{n}_m + \sqrt{1-k_m^2}\big(\vect{x}-(\vect{n}_m\cdot\vect{x})\vect{n}_m\big).
	\end{equation}
	Let $u_x=\sum_m p_m|\vect{r}_m|$ and define the vector $\vect{u}=u_x\,\vect{x}+\sqrt{1-u_x^2}\,\vect{z}$.
	Since $\vect{n}_m$ and $(\vect{x}-(\vect{n}_m\cdot\vect{x})\vect{n}_m)$ are orthogonal and $|\vect{x}-(\vect{n}_m\cdot\vect{x})\vect{n}_m|^2=1-(\vect{n}_m\cdot\vect{x})^2$ we have $1-|\vect{r}_m|^2 = k_m^2(1-(\vect{n}_m\cdot\vect{x})^2)$.
	Using Eq.~\eqref{eqn:NParameterised} along with the fact that $h$ is decreasing and $|\vect{n}_m\cdot\vect{z}|\le\sqrt{1-|\vect{n}_m\cdot\vect{x}|^2}$, we have
	\begin{align}
		N(\M,\sigma_z) &= \sum_m p_m h\big(|k_m(\vect{n}_m\cdot\vect{z})|\big)\notag\\ 
		& \ge \sum_m p_m h\left(|k_m|\sqrt{1-(\vect{n}_m\cdot\vect{x})^2}\right)\notag\\ 
		& = \sum_m p_m h\left(\sqrt{1-|\vect{r}_m|^2}\right)\notag\\
		&\ge h\left(\sqrt{1-(\textstyle\sum_m p_m|\vect{r}_m|)^2}\right)\notag\\ 
		& = h\left(\sqrt{1-u_x^2}\right) = h(\vprod{u}{z}) = H(\sigma_z|\rho_u),
	\end{align}
	with $\rho_u=\frac{1}{2}(\id+\vect{u}\cdot\vect{\sigma})$, and where we have used Fact~\ref{convexityLemma} to give the second inequality.
	
	Calculating the disturbance for L\"{u}ders instruments, i.e.\ $D_\mathcal{I}(\M,\sigma_x)=H(\mathbb{X}|\mathbb{X}_{\M,\mathcal{I}}')$, we have
	\begin{align}
		D_\mathcal{I}(\M,\sigma_x)&=h\big(|\textstyle\sum_m p_m \vect{r}_m\cdot\vect{x}|\big)\notag\\
		& \ge h\big(\textstyle\sum_m p_m|\vect{r}_m|\big) = h(u_x) = h(\vprod{u}{x})\notag\\
		&=H(\sigma_x|\rho_u).\label{eqn:distBoundLudersInstr}
	\end{align}
	
	We thus see that the noise and disturbance for L\"{u}ders instruments which implement the square-root dynamics can be both bounded below by the entropy of $\sigma_z$ and $\sigma_x$, respectively, for a common state with Bloch vector $\vect{u}$.
	We hence have $(N(\M,\sigma_z),\,\allowbreak D_\mathcal{I}(\M,\sigma_x))\allowbreak\in \cl E(\sigma_z,\sigma_x)$, and the proof of~\eqref{eqn:NDsqrtdynamics} is completed by recalling that the desired relation corresponds precisely to the lower boundary of $E(\sigma_z,\sigma_x)$.
	
Finally, to see that the relation is tight, consider any values $(s,t)$ satisfying $g(s)^2 + g(t)^2 \le 1$. 
One can then check that these can, for instance, be reached by the noise-disturbance values obtained for the (dichotomic) L\"{u}ders instrument $\M$ with POVM elements $M_\pm=\frac{1}{2}\big(\id\pm(g(s) \,\sigma_z + \sqrt{1{-}g(s)^2{-}g(t)^2}\,\sigma_y)\big)$ (recall that, for L\"{u}ders instruments, the POVM elements uniquely determine the instrument): using Eqs.~\eqref{eqn:NParameterised}, \eqref{eqn:rvectorFull} and~\eqref{eqn:DParameterised}, one indeed finds $N(\M,\sigma_z)=s$ and $D_\mathcal{I}(\M,\sigma_x)=t$.	
\end{proof}

The validity of Eq.~\eqref{eqn:NDrelnOrthog} -- or rather, Eq.~\eqref{eqn:NDsqrtdynamics} -- for measurements performed by L\"{u}ders instruments is 
particularly noteworthy in that it shows that this interesting class of measurements is not optimal.
This is in contrast to results showing such measurements to be optimal in other related scenarios: Ref.~\cite{Barnum:2000aa} found them to be optimal with respect to different measures of information gain and disturbance, while Ref.~\cite{Carmeli:2012aa} showed that they implement minimally unsharp sequential joint measurements.
In order to perform an optimal measurement that saturates the noise-disturbance tradeoff bound, one thus needs to consider non-trivial corrections,%
\footnote{One can strengthen Theorem~\ref{thm:sqrtdynamics} a little to show that it holds if a single unitary correction 
is applied irrespective of the measurement outcome (see the Appendix for a proof).} 
as in the counter-example of Section~\ref{sec:counterexample}, or, equivalently, measurements transforming the system according to more complicated dynamics.

\section{Conclusions and future research}

In this paper we have made use of a recently introduced information-theoretic approach to quantifying both the inherent noise in quantum measurements and the disturbance induced by measurements with respect to a subsequent ideal measurement in order to study, in detail, the noise-noise and noise-disturbance tradeoffs in qubit measurements.

Using recently published tight entropic preparation uncertainty relations for arbitrary qubit observables, we completely characterised the degree to which two incompatible Pauli observables can be jointly measured.
Specifically, we showed that the allowable noise-noise region is precisely the convex hull of the corresponding preparation uncertainty region.
These results could readily be extended to more than two observables to give joint-measurement uncertainty relations for three (or more) Pauli observables using the analogous results for entropic preparation uncertainty relations~\cite{Abbott:2016aa}.

We then discussed a recently proposed noise-disturbance uncertainty relation for orthogonal qubit measurements.
We showed that the proof given for this relation in Ref.~\cite{Sulyok:2015aa} was incorrect and provided counter-examples showing that it can be violated even by dichotomic measurements.
We provided a class of three-outcome measurements that we conjectured saturates the optimal noise-disturbance bound, and provided numerical evidence to back this up.
Interestingly, this characterisation of the set of allowable noise-disturbance values only requires three-outcome measurements, in contrast to the case of joint measurement, where measurements with four outcomes seem to be necessary.

Finally, we showed that an important class of measurements -- those performed by a L\"{u}ders instrument -- satisfies the more restrictive noise-disturbance relation of Ref.~\cite{Sulyok:2015aa}, and therefore cannot obtain the optimal qubit noise-disturbance tradeoff.
This broadens the class of measurements known to satisfy this relation well beyond the case of dichotomic measure-and-prepare instruments shown in~\cite{Sulyok:2016aa}, and thus emphasises that, in order to perform optimal measurements with respect to this tradeoff, one must utilise measurements with non-trivial post-measurement corrections to the state.

It remains an open problem to prove whether or not our conjectured noise-disturbance bound~\eqref{eqn:NDconjecturebound} is indeed correct, and it is similarly unknown whether this bound and the noise-noise bound can be simultaneously saturated by a single measurement.
It would also be interesting to compare these results to those known for more traditional root-mean-square error approaches~\cite{Ozawa:2003fh,Busch:2014ts}.
Furthermore, our results on the noise-disturbance tradeoff apply only to orthogonal Pauli measurements, and their generalisation to non-orthogonal measurements and higher-dimensional systems is left to future work.

\begin{acknowledgments}
We thank Michael J. W. Hall for several discussions and references related to this research.
AA and CB acknowledge financial support from the ``Retour Post-Doctorants'' program (ANR-13-PDOC-0026) of the French National Research Agency; CB also acknowledges the support of a Marie Curie International Incoming Fellowship (PIIF-GA-2013-623456) from the European Commission.
\end{acknowledgments}

\appendix
\renewcommand{\thesection}{}
\renewcommand{\theequation}{A\arabic{equation}}
\setcounter{equation}{0}
\renewcommand{\thefigure}{A\arabic{figure}}
\setcounter{figure}{0}

\section{}

\subsection{Two scenarios for determining $N(\M,A)$}

\begin{figure*}[ht]
	\begin{center}
	\begin{tabular}{ccc}
	\includegraphics{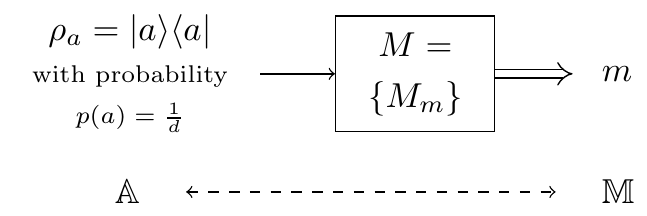}
	&
	\qquad\qquad\qquad
	&
	\includegraphics{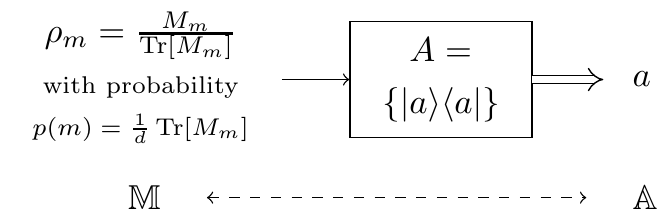}
	\\[2mm]
	\qquad \quad (a) && \qquad \qquad (b)\\
	\end{tabular}
	\end{center}
	\caption{As described in the text, the two situations above yield the same joint probability distribution $p(m,a) = \frac1d \Tr [M_m \oprod{a}{a}]$. The noise $N(\M,A)=H(\mathbb{A}|\mathbb{M})$ can readily be calculated in the second situation.}
	\label{fig:proof_NoiseConvEntropyRep}
\end{figure*}

The expression of the noise in terms of entropies of quantum observables in Eq.~\eqref{eqn:NoiseConvEntropyRep} shows that it is possible to determine the noise via two different experimental situations, both giving rise to the same joint probability distribution $p(m,a)$.

The first one is represented in Fig.~\ref{fig:proof_NoiseConvEntropyRep}(a), which is a simplified version of Figure~\ref{fig:NDSchematic}(a), in which the post-measurement state is ignored (recall indeed that it does not enter in the definition of the noise). 
The eigenstates $\ket{a}$ of $A$ are prepared with equal probabilities $p(a) = 1/d$, and measured by the POVM $\{M_m\}_m$. 
The second situation is that represented in Fig.~\ref{fig:proof_NoiseConvEntropyRep}(b): here, a quantum state $\rho_m = \frac{M_m}{\Tr[M_m]}$ is prepared with probability $p(m) = \frac1d \Tr[M_m]$, and undergoes a measurement of $A$. 
Eqs.~\eqref{eqn:mMarginal}--\eqref{eqn:NoiseConvEntropyRep} make it clear that both of these operational scenarios give rise to the same joint distributions $p(m,a)$, and thus both can equally well be used to determine the noise $N(\M,A)$.

Note that one way to prepare the states $\rho_m$ with the desired probabilities in the second situation is to measure the POVM $M^T = \{M_m^T\}_m$ on one subsystem of a pair in a maximally entangled state $\ket{\Phi^+} = \frac1{\sqrt{d}}\sum_j \ket{j}\ket{j}$ (where $\{\ket{j}\}$ denotes an orthonormal basis of the $d$-dimensional Hilbert space of one system, and ${\cdot}^T$ is the transposition in that basis). 
The same probability distribution $p(m,a)$ is then obtained in yet another scenario, which now involves the preparation of a fixed maximally entangled bipartite state, and measurements on both subsystems. 
This is precisely the scenario considered in the supplemental materials of Refs.~\cite{Buscemi:2014aa} and~\cite{Sulyok:2015aa} to calculate the noise $N(\M,A)$. 
Our derivation above shows that the introduction of an entangled state and the transpositions in those calculations were actually not necessary.

\subsection{Characterising the noise-noise region $R_{NN}(A,B)$}

The characterisation of the noise-noise region as in Proposition~\ref{prop:RNNDef} immediately follows from Eq.~\eqref{eqn:NoiseConvEntropyRep} along with the observation that the weighted ensemble of states $\{p(m),\rho_m\}_m$ defined above satisfies $\sum_m p(m) \rho_m=\id/d$, and that, vice versa, any weighted ensemble $\{p(m),\rho_m\}_m$ with $\sum_m p(m) \rho_m=\id/d$ defines a valid POVM $M=\{M_m = d \, p(m) \, \rho_m\}_m$.

As it turns out, the constraint $\sum_m p(m) \rho_m=\id/d$ can actually be disregarded in Eq.~\eqref{eqn:NNRegionForm} for the case of qubits. To see this, let $\{p(m),\,\rho_m\}_m$ be any arbitrary weighted ensemble of qubit states, i.e., $p(m)\ge 0$, $\sum_m p(m)=1$ and $\rho_m=\frac{1}{2}(\id+\vect{r}_m\cdot\vect{\sigma})$ where $|\vect{r_m}|\le 1$.
Then define $\{p(m,\pm),\,\rho_m^\pm\}_{m,\pm}$ with $p(m,\pm)=\frac{1}{2}p(m)$ and $\rho_m^\pm=\frac{1}{2}(\id\pm\vect{r}_m\cdot\vect{\sigma})$. This new ensemble satisfies
	\begin{align}
		\sum_{m,\pm}p(m,\pm) \rho_m^\pm = \sum_m\frac{1}{2}p(m)(\rho_m^+ + \rho_m^-)=\id/2.
	\end{align}
Furthermore, one has $H(A|\rho_m^\pm) = H(A|\rho_m)$ and similarly $H(B|\rho_m^\pm) = H(B|\rho_m)$, so that
	\begin{align}
		& \sum_{m,\pm} p(m,\pm) \big(H(A|\rho^\pm_m),\,H(B|\rho^\pm_m)\big) \notag \\[-2mm]
		& \qquad= \sum_m p(m) \big(H(A|\rho_m),\,H(B|\rho_m)\big).
	\end{align}
Hence, the ensemble $\{p(m),\,\rho_m\}_m$, which does not necessarily satisfy the constraint $\sum_m p(m) \rho_m=\id/d$, yields the same noise-noise values as another ensemble, which does satisfy the constraint. This proves that this constraint could indeed be removed from~\eqref{eqn:NNRegionForm}, from which it follows that the noise-noise region is then simply the convex hull of the preparation uncertainty region $E(A,B)$, as expressed by Proposition~\ref{prop:QubitRNNDef}.

Note that the above argument required considering a second ensemble with twice as many states as the original one -- or equivalently, due to the one-to-one correspondence highlighted above (for the second ensemble which does satisfy the previous normalisation constraint), a POVM with twice as many outcomes. 
Therefore the argument does not work if one imposes a fixed number of outcomes, as in the case of dichotomic measurements considered in the paper (for which the noise-noise region is then not necessarily convex).

\subsection{Relating the noise-noise and noise-disturbance regions}

Consider an arbitrary point in the noise-disturbance region $R_{ND}(A,B)$, obtained by some instrument $\M = \{\M_m\}_m$ and the optimal correction procedure $\E = \{\E_m\}_m$. We can combine $\M$, $\E$, and the final measurement of $B$ in Figure~\ref{fig:NDSchematic}(b) to define a global instrument $\M^{\E,B}$ (or a POVM $M^{\E,B}$, since the post-measurement state will not matter) with pairs of outcomes $(m,b')$.

The noises yielded by the instrument $\M^{\E,B}$ are then
	\begin{align}
		& N(\M^{\E,B},A) = H(\mathbb{A}|\mathbb{M},\mathbb{B}_{\M,\E}') \le H(\mathbb{A}|\mathbb{M}) = N(\M,A), \\
		& N(\M^{\E,B},B) = H(\mathbb{B}|\mathbb{M},\mathbb{B}_{\M,\E}') \le H(\mathbb{B}|\mathbb{B}_{\M,\E}') = D(\M,B),\label{eqn:DistBoundJointMB}
	\end{align}
where we have used the classical data-processing inequalities.

Hence, the noise $N(\M,A)$ and disturbance $D(\M,B)$ are bounded below by the noise values corresponding to another instrument $\M^{\E,B}$, which gives a point in the noise-noise region $R_{NN}(A,B)$. This proves Proposition~\ref{prop:NN-ND-relation}, that $R_{ND}(A,B)\subseteq \cl R_{NN}(A,B)$.

Note that the above argument does not hold if one imposes a limit on the number of outcomes, since the POVM $M^{\E,B}$ has $d\cdot|M|$ outcomes, where $d$ is the Hilbert space dimension and $|M|$ is the number of outcomes for $\M$.
The example given in Sec.~\ref{sec:NDdichotomic} of the main text for dichotomic measurements shows that one may, in such cases, indeed have $R^*_{ND}(A,B)\not\subseteq \cl R^*_{NN}(A,B)$.

Nevertheless, a similar argument can be used to show that one does have $R_{ND}^{MP}(A,B)\subseteq \cl R_{NN}^{MP}(A,B)$ when the measurements are performed by `measure-and-prepare' instruments (hence the superscript $MP$) if the number of outcomes is limited -- in particular, for dichotomic such measurements.
To see this, note as in~\cite{Sulyok:2016aa} that for such measurements, $\mathbb{B}\to\mathbb{M}\to\mathbb{B}_{\M,\E}'$ is a Markov chain and thus $H(\mathbb{B}|\mathbb{M},\mathbb{B}_{\M,\E}')=H(\mathbb{B}|\mathbb{M})=N(\M,B)$.
From Eq.~\eqref{eqn:DistBoundJointMB} we see that, for such measurements, $D(\M,B)\ge N(\M,B)$ and hence the noise and disturbance are bounded below by the noise values for the \emph{same} instrument (rather than the instrument $\M^{\E,B}$ used in the above, completely general, argument) proving the claim.

\subsection{Calculating the disturbance $D_\mathcal{I}(\M,B)$ for purity-preserving qubit instruments}

In order to derive an analytic formula for the disturbance $D_\mathcal{I}(\M,B)$ for purity-preserving qubit instruments we take a similar approach to that of Sec.~\ref{sec:NDLuders} for L\"{u}ders instruments, except now a further unitary transformation which depends on the measurement outcome is applied before measuring $B$. For simplicity we present here the calculation for $B=\sigma_x$ (as in Sec.~\ref{sec:NDLuders}), but it can straightforwardly be adapted to any Pauli observable $B$.

Let $M=\{M_m\}_m$ be the POVM corresponding to a purity-preserving instrument $\M=\{\M_m\}_m$.
Then, as discussed in Sec.~\ref{sec:characterisingRND}, on outcome $m$ the state is updated according to $\M_m(\rho)=U_m\sqrt{M_m}\rho\sqrt{M_m}U_m^\dagg$, where $U_m$ is a unitary transformation.
As in Eq.~\eqref{eqn:genPOVMform} we can write $M_m=p_m(\id+k_m\vect{n}_m\cdot\vect{\sigma})$, where $|\vect{n}_m|=1$, $p_m\ge 0$ and $|k_m| \le 1$, and which satisfies the normalisation constraints $\sum_m p_m=1$ and $\sum_m p_m k_m \vect{n}_m=\vect{0}$.

Calculating the post measurement states
\begin{equation}
	\rho_\pm=\M(\oprod{\pm x}{\pm x})=\frac{1}{2}(\id + \vect{r}_\pm\cdot\vect{\sigma})
\end{equation}
we find that $\vect{r}_\pm = \pm \vect{r}_0 + \vect{r}_\delta$, where
\begin{align}
	\vect{r}_0 = & \sum_m p_m  \left( (\vect{n}_m\cdot\vect{x})\vect{n}_m' \vphantom{\sqrt{1-k_m^2}}\right. \notag \\[-3mm]
	& \hspace{12mm} \left. + \sqrt{1-k_m^2}(\vect{x}_m'-(\vect{n}_m\cdot\vect{x})\vect{n}_m') \right), \\
	\vect{r}_\delta = & \sum_m p_mk_m\vect{n}_m',
\end{align}
and where $\vect{n}_m'$ and $\vect{x}_m'$ are rotations of $\vect{n}_m$ and $\vect{x}$ under $U_m$ satisfying $U_m(\vect{n}_m\cdot\vect{\sigma})U_m^\dagg=\vect{n}_m'\cdot\vect{\sigma}$ and $U_m(\vect{x}\cdot\vect{\sigma})U_m^\dagg=\vect{x}_m'\cdot\vect{\sigma}$.
Note that $\vect{r}_0$ can be obtained by rotating each summand in Eq.~\eqref{eqn:rvectorFull} for the Bloch vector obtained for L\"{u}ders instruments.
However, the presence of $\vect{r}_\delta$ means that, in stark contrast to the case for L\"{u}ders instruments, one generally has $\vect{r}_+\neq \vect{r}_-$.
The disturbance $D_\mathcal{I}(\M,\sigma_x)=H(\mathbb{X}|\mathbb{X}_{\M,\mathcal{I}}')$ can then be calculated directly to be
\begin{align}\label{eqn:DistPPgenForm}
	D_\mathcal{I}(\M,\sigma_x) = \sum_\pm \frac{1\pm \vect{r}_\delta\cdot\vect{x}}{2}\,h\left(\frac{|\vect{r}_0\cdot\vect{x}|}{1\pm\vect{r}_\delta\cdot\vect{x}}\right).
\end{align}

Note that if a single unitary $U$ is applied irrespective of the measurement outcome (i.e., $U_m=U$ for all $m$) one has $\vect{r}_\delta = \vect{0}$ and $\vect{r}_+=-\vect{r}_-$, as for L\"{u}ders instruments.
The disturbance $D_\mathcal{I}(\M,\sigma_x)$ is then simply $h(|\vect{r}_0\cdot\vect{x}|)$ and can be bounded below as in Eq.~\eqref{eqn:distBoundLudersInstr}.
One can then readily see that for such instruments the relation Eq.~\eqref{eqn:NDsqrtdynamics} is once again satisfied.

\subsection{Numerically sampling the points in $R_{ND}^{PP}(\sigma_z,\sigma_x)$}

In order to determine the lower boundary of the noise-disturbance region $R_{ND}^{PP}(\sigma_z,\sigma_x)$ for measurements with various numbers of outcomes, one wishes to sample instruments that are as close to this boundary as possible.
However, na\"{i}ve generation of random instruments performs very poorly at this.
In this section, we discuss some techniques for sampling large numbers of instruments that allow the lower boundary of $R_{ND}^{PP}(\sigma_z,\sigma_x)$ to be more easily investigated using numerical simulations.

Firstly, note that since both the noise~\eqref{eqn:NParameterised} and the disturbance~\eqref{eqn:DistPPgenForm} depend on the inner product of certain Bloch vectors with the $z$- and $x$-axes, one can essentially restrict oneself to this plane.
By considering POVMs and unitaries than act only in this plane one can sample more efficiently, and any $y$ component of the POVMs or post-measurement Bloch vectors serves only to increase both the noise and disturbance.%
\footnote{There are many ways one could generate random $k$-outcome POVMs to this end.
One such method would be to generate $k$ random states $\{\rho_m\}_m$ (e.g., with Bloch vectors uniformly distributed in the unit circle in the $xz$-plane) and a random probability distribution $\{p_m\}_m$ with $p_m\ge 0$ for all $m$ and $\sum_m p_m=1$ (e.g., by sampling from a Dirichlet distribution).
Let $\bar{\rho}=\sum_m p_m\rho_m$.
Then the operators $M_m=p_m(\sqrt{\bar{\rho}})^{-1}\rho_m(\sqrt{\bar{\rho}})^{-1}$ are Hermitian positive semidefinite and sum to the identity, and thus $\{M_m\}_m$ is a valid random POVM.
\label{fn:randomPOVM}}

In order to obtain initial bounds on $R_{ND}^{PP}(\sigma_z,\sigma_x)$, it is much more efficient to sample POVMs whose elements are all rank-one operators (i.e., proportional to projection observables).%
\footnote{The elements of any such POVM $\{M_m\}_m$ can be written $M_m=p_m(\id + \vect{n}_m\cdot\vect{\sigma})$ where $|\vect{n}_n|=1$, $p_m\ge 0$, $\sum_m p_m = 1$ and $\sum_m p_m \vect{n}_m = \vect{0}$. Random such POVMs with $k$ outcomes can easily be generated by choosing $k$ vectors $\vect{u}_m$ such that $\sum_m \vect{u}_m=\vect{0}$ and choosing $\vect{n}_m=\vect{u}_m/|\vect{u}_m|$ and $p_m=|\vect{u}_m|/(\sum_{m'}|\vect{u}_{m'}|)$.\label{fn:randomRank1POVM}}
These are simply the extremal POVMs~\cite{Haapasalo:2012aa}, although as a result of the apparent non-convexity of $R_{ND}^{PP}(\sigma_z,\sigma_x)$, these are not \emph{a priori} guaranteed to fully cover the noise-disturbance region.
However, empirically it does seem to be the case -- with the notable exception of the situation where the number of measurement outcomes is fixed -- that one obtains the same region whether or not one restricts oneself to such POVMs, and generally they provide data points much closer to the lower boundary of $R_{ND}^{PP}(\sigma_z,\sigma_x)$, thus allowing more efficient sampling.

Finally, for any given POVM $M=\{M_m\}_m$, one thus wishes to find the unitaries $\{U_m\}_m$ giving rise to the purity-preserving instrument that minimises Eq.~\eqref{eqn:DistPPgenForm}.
Although there seems to be no simple analytic approach to doing so, one can use numerical methods to perform such a minimisation and probe more precisely the boundary of $R_{ND}^{PP}(\sigma_z,\sigma_x)$.
Such minimisation can, in reality, be rather slow, but a rather good heuristic is to choose the unitaries that rotate the summands in Eq.~\eqref{eqn:rvectorFull} onto the positive $x$-axis.
In practise this gives results that are close to optimal -- and in many cases, such as for the example in Sec.~\ref{sec:counterexample}, demonstrably optimal -- and can be performed very quickly, allowing efficient sampling.

The particular results shown in Fig.~\ref{fig:NDbound} for three- and four-outcome instruments were obtained using random POVMs with rank-one elements (generated using the procedure described in Footnote~\ref{fn:randomRank1POVM}) and unitary corrections found with the heuristic optimisation described above.
Those in Fig.~\ref{fig:NDdichotomic} for dichotomic instruments were obtained by POVMs with one rank-one element and one rank-two element%
\footnote{Such a POVM can, for example, be efficiently generated by choosing a random unit vector $\vect{n}$ in the $xz$-plane and a random $p\in(0,1/2)$.
If we let $M_+=p(\id + \vect{n}\cdot\vect{\sigma})$ and $M_-=\id-M_+$ then $\{M_\pm\}_\pm$ is easily seen to be a random POVM with these properties.} 
(since no violation of Eq.~\eqref{eqn:NDrelnOrthog} appears possible with only rank-one elements) using a numerical optimisation for finding the best unitary corrections.

\bibliography{NoiseDisturbance}

\end{document}